\theoremstyle{plain}
\declaretheorem[name=Theorem, numberwithin=section]{thm}
\declaretheorem[name=Lemma,sharenumber=thm]{lem}
\declaretheorem[name=Observation,sharenumber=thm]{obs}
\declaretheorem[name=Defintion,sharenumber=thm]{Def}
\newcounter{note}[section]
\newcommand{\bc}[1]{&{\text{$\left(\text{By #1} \right)$}}}
\newcommand{\poly}{\text{poly}}
\newcommand{\mcI}{\mathcal{I}}
\newcommand{\mcS}{\mathcal{S}}
\newcommand{\mcT}{\mathcal{T}}
\def\edgeWidth{.12}
\tikzstyle{node}=[circle,draw,fill=black, scale=.45, line width=.3mm, minimum size = 12 pt]
\tikzstyle{edgeStyle}=[line width = \edgeWidth cm, font=\tiny]
\tikzstyle{nodeSmall}=[circle,draw,fill=black, scale=1, line width=.3mm, minimum size = 12 pt]
\tikzstyle{edgeStyleSmall}=[line width = .2 cm]
\definecolor{c0}{HTML}{FFFFFF}	
\definecolor{S}{HTML}{4363D8}	
\definecolor{T}{HTML}{00E6E6}	
\definecolor{U}{HTML}{FBBC05}	
\definecolor{V}{HTML}{FF8000}	
\definecolor{W}{HTML}{34A853}	
\definecolor{X}{HTML}{B12DD2}	
\definecolor{Y}{HTML}{EA4335}	
\definecolor{Z}{HTML}{800000}	
\definecolor{IDO}{HTML}{FF0000}	
\definecolor{IDT}{HTML}{00FF00}	
\def\drawMultiColorEdge((#1,#2),(#3,#4))[#5][#6][#7][#8]{ 
	\def\clipRadius{2};
	\draw [edgeStyle,color=#5] (#1,#2) -- (#3,#4);
	
	\begin{scope}
		\clip (#1,#2) -- (#3,#4) -- (#3+\clipRadius,#4) -- (#1+\clipRadius, #2) -- cycle;
		\draw [edgeStyle,color=#6] (#1,#2) -- (#3,#4);
	\end{scope}

	\draw[draw=none] (#1,#2) -- (#3,#4) node [midway,fill=white,sloped,pos =#7,font={\fontsize{6pt}{4.2}\selectfont}, inner sep=.5pt,rotate=#8] {\color{#5}#5\color{#6} #6} ;	
}
\def\drawMultiColorEdgeSmall((#1,#2),(#3,#4))[#5][#6][#7][#8]{ 
	\def\clipRadius{2};
	\draw [edgeStyleSmall,color=#5] (#1,#2) -- (#3,#4);
	
	\begin{scope}
		\clip (#1,#2) -- (#3,#4) -- (#3+\clipRadius,#4) -- (#1+\clipRadius, #2) -- cycle;
		\draw [edgeStyleSmall,color=#6] (#1,#2) -- (#3,#4);
	\end{scope}
	
	\draw[draw=none] (#1,#2) -- (#3,#4) node [midway,fill=white,sloped,pos =#7,font={\fontsize{12pt}{4.2}\selectfont}, inner sep=.5pt,rotate=#8] {\color{#5}#5\color{#6} #6} ;	
}
\FPeval{\xInstanceOffset}{1}
\FPeval{\instanceWidth}{3.2}
\FPeval{\totalWidth}{round(3*\xInstanceOffset+4*\instanceWidth, 3)}
\FPeval{\xOneOffset}{\totalWidth/3}
\FPeval{\xTwoOffset}{\totalWidth/2}
\FPeval{\xThreeOffset}{\instanceWidth/3}
\FPeval{\xFourOffset}{\instanceWidth/15}
\FPeval{\secondRootOffset}{\instanceWidth/6}
\def\defineCoordinatesOne{
	
	\def\ya{6}
	\def\yb{5}
	\def\yc{2}
	\def\yd{0}
	
	\FPeval{\xa}{0}
	\FPeval{\xb}{\xa+\xOneOffset}
	\FPeval{\xc}{\xb+\xOneOffset}
	\FPeval{\xd}{\xc+\xOneOffset}
	
	\FPeval{\xe}{\totalWidth/6}
	\FPeval{\xf}{5*\totalWidth/6}
	
	\FPeval{\xgA}{0}	
	\FPeval{\xhA}{\xgA+\xThreeOffset}	
	\FPeval{\xiA}{\xhA+\xThreeOffset}		
	\FPeval{\xjA}{\xiA+\xThreeOffset}	
	
	\FPeval{\xkA}{0}	
	\FPeval{\xlA}{\xkA+\xFourOffset}	
	\FPeval{\xmA}{\xlA+\xFourOffset}
	\FPeval{\xnA}{\xmA+\xFourOffset}
	
	\FPeval{\xoA}{\xnA + \xFourOffset}	
	\FPeval{\xpA}{\xoA + \xFourOffset}	
	\FPeval{\xqA}{\xpA+\xFourOffset}
	\FPeval{\xrA}{\xqA+\xFourOffset}
	
	\FPeval{\xsA}{\xrA+\xFourOffset}	
	\FPeval{\xtA}{\xsA+\xFourOffset}	
	\FPeval{\xuA}{\xtA+\xFourOffset}
	\FPeval{\xvA}{\xuA+\xFourOffset}
	
	\FPeval{\xwA}{\xvA+\xFourOffset}	
	\FPeval{\xxA}{\xwA+\xFourOffset}	
	\FPeval{\xyA}{\xxA+\xFourOffset}
	\FPeval{\xzA}{\xyA+\xFourOffset}
	
	\FPeval{\BOffset}{\instanceWidth+\xInstanceOffset}
	
	\def\xgB{\xgA+\BOffset}	
	\def\xhB{\xhA+\BOffset}	
	\def\xiB{\xiA+\BOffset}		
	\def\xjB{\xjA+\BOffset}	
	
	\def\xkB{\xkA+\BOffset}	
	\def\xlB{\xlA+\BOffset}	
	\def\xmB{\xmA+\BOffset}
	\def\xnB{\xnA+\BOffset}
	
	\def\xoB{\xoA + \BOffset}	
	\def\xpB{\xpA + \BOffset}	
	\def\xqB{\xqA+\BOffset}
	\def\xrB{\xrA+\BOffset}
	
	\def\xsB{\xsA+\BOffset}	
	\def\xtB{\xtA+\BOffset}	
	\def\xuB{\xuA+\BOffset}
	\def\xvB{\xvA+\BOffset}
	
	\def\xwB{\xwA+\BOffset}	
	\def\xxB{\xxA+\BOffset}	
	\def\xyB{\xyA+\BOffset}
	\def\xzB{\xzA+\BOffset}
	
	\def\COffset{2*\instanceWidth+2*\xInstanceOffset}
	
	\def\xgC{\xgA+\COffset}	
	\def\xhC{\xhA+\COffset}	
	\def\xiC{\xiA+\COffset}		
	\def\xjC{\xjA+\COffset}	
	
	\def\xkC{\xkA+\COffset}	
	\def\xlC{\xlA+\COffset}	
	\def\xmC{\xmA+\COffset}
	\def\xnC{\xnA+\COffset}
	
	\def\xoC{\xoA + \COffset}	
	\def\xpC{\xpA + \COffset}	
	\def\xqC{\xqA+\COffset}
	\def\xrC{\xrA+\COffset}
	
	\def\xsC{\xsA+\COffset}	
	\def\xtC{\xtA+\COffset}	
	\def\xuC{\xuA+\COffset}
	\def\xvC{\xvA+\COffset}
	
	\def\xwC{\xwA+\COffset}	
	\def\xxC{\xxA+\COffset}	
	\def\xyC{\xyA+\COffset}
	\def\xzC{\xzA+\COffset}
	
	\def\DOffset{3*\instanceWidth+3*\xInstanceOffset}
	
	\def\xgD{\xgA+\DOffset}	
	\def\xhD{\xhA+\DOffset}	
	\def\xiD{\xiA+\DOffset}		
	\def\xjD{\xjA+\DOffset}	
	
	\def\xkD{\xkA+\DOffset}	
	\def\xlD{\xlA+\DOffset}	
	\def\xmD{\xmA+\DOffset}
	\def\xnD{\xnA+\DOffset}
	
	\def\xoD{\xoA + \DOffset}	
	\def\xpD{\xpA + \DOffset}	
	\def\xqD{\xqA+\DOffset}
	\def\xrD{\xrA+\DOffset}
	
	\def\xsD{\xsA+\DOffset}	
	\def\xtD{\xtA+\DOffset}	
	\def\xuD{\xuA+\DOffset}
	\def\xvD{\xvA+\DOffset}
	
	\def\xwD{\xwA+\DOffset}	
	\def\xxD{\xxA+\DOffset}	
	\def\xyD{\xyA+\DOffset}
	\def\xzD{\xzA+\DOffset}
}
\def \drawNodesOne{
	
	\def\oneOffset{1}
	
	\node at (\xa, \ya+\oneOffset) [node] (aa) {};
	\node at (\xb, \ya+\oneOffset) [node] (ba) {};
	\node at (\xc, \ya+\oneOffset) [node] (ca) {};
	\node at (\xd, \ya+\oneOffset) [node] (da) {};
	
	\node at (\xe, \yb+\oneOffset) [node,fill=white,draw=IDO] (eb) {};
	\node at (\xf, \yb+\oneOffset) [node,fill=white,draw=IDT] (fb) {};
}
\def \drawEdgesOne{
	\def\bottomdist{.87}
	\def\oneOffset{1}
	\drawMultiColorEdge((\xa, \ya+\oneOffset),(\xe, \yb+\oneOffset))[S][T][.5][0]
	\drawMultiColorEdge((\xb, \ya+\oneOffset),(\xe, \yb+\oneOffset))[U][V][.5][0]
	\drawMultiColorEdge((\xc, \ya+\oneOffset),(\xf, \yb+\oneOffset))[W][X][.5][0]
	\drawMultiColorEdge((\xd, \ya+\oneOffset),(\xf, \yb+\oneOffset))[Y][Z][.5][0]
}
\def\drawLBConstructStepOne{
	\begin{scope}
		\defineCoordinatesOne
		\drawEdgesOne
		\drawNodesOne
	\end{scope}
}
\def\defineCoordinatesTwo{
	
	\def\ya{6}
	\def\yb{5}
	\def\yc{2}
	\def\yd{0}
	
	%
	%
	
	\FPeval{\xa}{0}
	\FPeval{\xb}{\xa+\xOneOffset}
	\FPeval{\xc}{\xb+\xOneOffset}
	\FPeval{\xd}{\xc+\xOneOffset}
	
	\FPeval{\xe}{\totalWidth/6}
	\FPeval{\xf}{5*\totalWidth/6}
	
	\FPeval{\xgA}{0}	
	\FPeval{\xhA}{\xgA+\xThreeOffset}	
	\FPeval{\xiA}{\xhA+\xThreeOffset}		
	\FPeval{\xjA}{\xiA+\xThreeOffset}	
	
	\FPeval{\xkA}{0}	
	\FPeval{\xlA}{\xkA+\xFourOffset}	
	\FPeval{\xmA}{\xlA+\xFourOffset}
	\FPeval{\xnA}{\xmA+\xFourOffset}
	
	\FPeval{\xoA}{\xnA + \xFourOffset}	
	\FPeval{\xpA}{\xoA + \xFourOffset}	
	\FPeval{\xqA}{\xpA+\xFourOffset}
	\FPeval{\xrA}{\xqA+\xFourOffset}
	
	\FPeval{\xsA}{\xrA+\xFourOffset}	
	\FPeval{\xtA}{\xsA+\xFourOffset}	
	\FPeval{\xuA}{\xtA+\xFourOffset}
	\FPeval{\xvA}{\xuA+\xFourOffset}
	
	\FPeval{\xwA}{\xvA+\xFourOffset}	
	\FPeval{\xxA}{\xwA+\xFourOffset}	
	\FPeval{\xyA}{\xxA+\xFourOffset}
	\FPeval{\xzA}{\xyA+\xFourOffset}
	
	\FPeval{\BOffset}{\instanceWidth+\xInstanceOffset}
	
	\def\xgB{\xgA+\BOffset}	
	\def\xhB{\xhA+\BOffset}	
	\def\xiB{\xiA+\BOffset}		
	\def\xjB{\xjA+\BOffset}	
	
	\def\xkB{\xkA+\BOffset}	
	\def\xlB{\xlA+\BOffset}	
	\def\xmB{\xmA+\BOffset}
	\def\xnB{\xnA+\BOffset}
	
	\def\xoB{\xoA + \BOffset}	
	\def\xpB{\xpA + \BOffset}	
	\def\xqB{\xqA+\BOffset}
	\def\xrB{\xrA+\BOffset}
	
	\def\xsB{\xsA+\BOffset}	
	\def\xtB{\xtA+\BOffset}	
	\def\xuB{\xuA+\BOffset}
	\def\xvB{\xvA+\BOffset}
	
	\def\xwB{\xwA+\BOffset}	
	\def\xxB{\xxA+\BOffset}	
	\def\xyB{\xyA+\BOffset}
	\def\xzB{\xzA+\BOffset}
	
	\def\COffset{2*\instanceWidth+2*\xInstanceOffset}
	
	\def\xgC{\xgA+\COffset}	
	\def\xhC{\xhA+\COffset}	
	\def\xiC{\xiA+\COffset}		
	\def\xjC{\xjA+\COffset}	
	
	\def\xkC{\xkA+\COffset}	
	\def\xlC{\xlA+\COffset}	
	\def\xmC{\xmA+\COffset}
	\def\xnC{\xnA+\COffset}
	
	\def\xoC{\xoA + \COffset}	
	\def\xpC{\xpA + \COffset}	
	\def\xqC{\xqA+\COffset}
	\def\xrC{\xrA+\COffset}
	
	\def\xsC{\xsA+\COffset}	
	\def\xtC{\xtA+\COffset}	
	\def\xuC{\xuA+\COffset}
	\def\xvC{\xvA+\COffset}
	
	\def\xwC{\xwA+\COffset}	
	\def\xxC{\xxA+\COffset}	
	\def\xyC{\xyA+\COffset}
	\def\xzC{\xzA+\COffset}
	
	\def\DOffset{3*\instanceWidth+3*\xInstanceOffset}
	
	\def\xgD{\xgA+\DOffset}	
	\def\xhD{\xhA+\DOffset}	
	\def\xiD{\xiA+\DOffset}		
	\def\xjD{\xjA+\DOffset}	
	
	\def\xkD{\xkA+\DOffset}	
	\def\xlD{\xlA+\DOffset}	
	\def\xmD{\xmA+\DOffset}
	\def\xnD{\xnA+\DOffset}
	
	\def\xoD{\xoA + \DOffset}	
	\def\xpD{\xpA + \DOffset}	
	\def\xqD{\xqA+\DOffset}
	\def\xrD{\xrA+\DOffset}
	
	\def\xsD{\xsA+\DOffset}	
	\def\xtD{\xtA+\DOffset}	
	\def\xuD{\xuA+\DOffset}
	\def\xvD{\xvA+\DOffset}
	
	\def\xwD{\xwA+\DOffset}	
	\def\xxD{\xxA+\DOffset}	
	\def\xyD{\xyA+\DOffset}
	\def\xzD{\xzA+\DOffset}
}
\def \drawNodesTwo{
	
	\node at (\xgA, \yb) [node,fill=black,thick,draw=IDO] (eb) {};
	\node at (\xhA, \yb) [node,fill=black,thick,draw=IDO] (eb) {};
	\node at (\xiA, \yb) [node,fill=black,thick,draw=IDO] (eb) {};
	\node at (\xjA, \yb) [node,fill=black,thick,draw=IDO] (eb) {};
	
	\node at (\xgA+\secondRootOffset, \yb) [node,fill=black,thick,draw=IDT] (eb) {};
	\node at (\xhA+\secondRootOffset, \yb) [node,fill=black,thick,draw=IDT] (eb) {};
	\node at (\xiA+\secondRootOffset, \yb) [node,fill=black,thick,draw=IDT] (eb) {};
	\node at (\xjA+\secondRootOffset, \yb) [node,fill=black,thick,draw=IDT] (eb) {};
	
	\node at (\xgA, \yc) [node,fill=white] (gcA) {};
	\node at (\xhA, \yc) [node,fill=white] (hcA) {};
	\node at (\xiA, \yc) [node,fill=white] (icA) {};
	\node at (\xjA, \yc) [node,fill=white] (jcA) {};
	
	\node at (\xkA, \yd) [node,fill=white] (kdA) {};
	\node at (\xlA, \yd) [node,fill=white] (ldA) {};
	\node at (\xmA, \yd) [node,fill=white] (mdA) {};
	\node at (\xnA, \yd) [node,fill=white] (ndA) {};
	
	\node at (\xoA, \yd) [node,fill=white] (odA) {};
	\node at (\xpA, \yd) [node,fill=white] (pdA) {};
	\node at (\xqA, \yd) [node,fill=white] (qdA) {};
	\node at (\xrA, \yd) [node,fill=white] (rdA) {};
	
	\node at (\xsA, \yd) [node,fill=white] (sdA) {};
	\node at (\xtA, \yd) [node,fill=white] (tdA) {};
	\node at (\xuA, \yd) [node,fill=white] (udA) {};
	\node at (\xvA, \yd) [node,fill=white] (vdA) {};
	
	\node at (\xwA, \yd) [node,fill=white] (wdA) {};
	\node at (\xxA, \yd) [node,fill=white] (xdA) {};
	\node at (\xyA, \yd) [node,fill=white] (ydA) {};
	\node at (\xzA, \yd) [node,fill=white] (zdA) {};
	
	\node at (\xgB, \yb) [node,fill=black,thick,draw=IDO] (eb) {};
	\node at (\xhB, \yb) [node,fill=black,thick,draw=IDO] (eb) {};
	\node at (\xiB, \yb) [node,fill=black,thick,draw=IDO] (eb) {};
	\node at (\xjB, \yb) [node,fill=black,thick,draw=IDO] (eb) {};
	
	\node at (\xgB+\secondRootOffset, \yb) [node,fill=black,thick,draw=IDT] (eb) {};
	\node at (\xhB+\secondRootOffset, \yb) [node,fill=black,thick,draw=IDT] (eb) {};
	\node at (\xiB+\secondRootOffset, \yb) [node,fill=black,thick,draw=IDT] (eb) {};
	\node at (\xjB+\secondRootOffset, \yb) [node,fill=black,thick,draw=IDT] (eb) {};
	
	\node at (\xgB, \yc) [node,fill=white] (gcB) {};
	\node at (\xhB, \yc) [node,fill=white] (hcB) {};
	\node at (\xiB, \yc) [node,fill=white] (icB) {};
	\node at (\xjB, \yc) [node,fill=white] (jcB) {};
	
	\node at (\xkB, \yd) [node,fill=white] (kdB) {};
	\node at (\xlB, \yd) [node,fill=white] (ldB) {};
	\node at (\xmB, \yd) [node,fill=white] (mdB) {};
	\node at (\xnB, \yd) [node,fill=white] (ndB) {};
	
	\node at (\xoB, \yd) [node,fill=white] (odB) {};
	\node at (\xpB, \yd) [node,fill=white] (pdB) {};
	\node at (\xqB, \yd) [node,fill=white] (qdB) {};
	\node at (\xrB, \yd) [node,fill=white] (rdB) {};
	
	\node at (\xsB, \yd) [node,fill=white] (sdB) {};
	\node at (\xtB, \yd) [node,fill=white] (tdB) {};
	\node at (\xuB, \yd) [node,fill=white] (udB) {};
	\node at (\xvB, \yd) [node,fill=white] (vdB) {};
	
	\node at (\xwB, \yd) [node,fill=white] (wdB) {};
	\node at (\xxB, \yd) [node,fill=white] (xdB) {};
	\node at (\xyB, \yd) [node,fill=white] (ydB) {};
	\node at (\xzB, \yd) [node,fill=white] (zdB) {};
	
	\node at (\xgC, \yb) [node,fill=black,thick,draw=IDO] (eb) {};
	\node at (\xhC, \yb) [node,fill=black,thick,draw=IDO] (eb) {};
	\node at (\xiC, \yb) [node,fill=black,thick,draw=IDO] (eb) {};
	\node at (\xjC, \yb) [node,fill=black,thick,draw=IDO] (eb) {};
	
	\node at (\xgC+\secondRootOffset, \yb) [node,fill=black,thick,draw=IDT] (eb) {};
	\node at (\xhC+\secondRootOffset, \yb) [node,fill=,thick,draw=IDT] (eb) {};
	\node at (\xiC+\secondRootOffset, \yb) [node,fill=,thick,draw=IDT] (eb) {};
	\node at (\xjC+\secondRootOffset, \yb) [node,fill=,thick,draw=IDT] (eb) {};
	
	\node at (\xgC, \yc) [node,fill=white] (gcC) {};
	\node at (\xhC, \yc) [node,fill=white] (hcC) {};
	\node at (\xiC, \yc) [node,fill=white] (icC) {};
	\node at (\xjC, \yc) [node,fill=white] (jcC) {};
	
	\node at (\xkC, \yd) [node,fill=white] (kdC) {};
	\node at (\xlC, \yd) [node,fill=white] (ldC) {};
	\node at (\xmC, \yd) [node,fill=white] (mdC) {};
	\node at (\xnC, \yd) [node,fill=white] (ndC) {};
	
	\node at (\xoC, \yd) [node,fill=white] (odC) {};
	\node at (\xpC, \yd) [node,fill=white] (pdC) {};
	\node at (\xqC, \yd) [node,fill=white] (qdC) {};
	\node at (\xrC, \yd) [node,fill=white] (rdC) {};
	
	\node at (\xsC, \yd) [node,fill=white] (sdC) {};
	\node at (\xtC, \yd) [node,fill=white] (tdC) {};
	\node at (\xuC, \yd) [node,fill=white] (udC) {};
	\node at (\xvC, \yd) [node,fill=white] (vdC) {};
	
	\node at (\xwC, \yd) [node,fill=white] (wdC) {};
	\node at (\xxC, \yd) [node,fill=white] (xdC) {};
	\node at (\xyC, \yd) [node,fill=white] (ydC) {};
	\node at (\xzC, \yd) [node,fill=white] (zdC) {};
	
	\node at (\xgD, \yb) [node,fill=black,thick,draw=IDO] (eb) {};
	\node at (\xhD, \yb) [node,fill=black,thick,draw=IDO] (eb) {};
	\node at (\xiD, \yb) [node,fill=black,thick,draw=IDO] (eb) {};
	\node at (\xjD, \yb) [node,fill=black,thick,draw=IDO] (eb) {};
	
	\node at (\xgD+\secondRootOffset, \yb) [node,fill=black,thick,draw=IDT] (eb) {};
	\node at (\xhD+\secondRootOffset, \yb) [node,fill=black,thick,draw=IDT] (eb) {};
	\node at (\xiD+\secondRootOffset, \yb) [node,fill=black,thick,draw=IDT] (eb) {};
	\node at (\xjD+\secondRootOffset, \yb) [node,fill=black,thick,draw=IDT] (eb) {};
	
	\node at (\xgD, \yc) [node,fill=white] (gcD) {};
	\node at (\xhD, \yc) [node,fill=white] (hcD) {};
	\node at (\xiD, \yc) [node,fill=white] (icD) {};
	\node at (\xjD, \yc) [node,fill=white] (jcD) {};
	
	\node at (\xkD, \yd) [node,fill=white] (kdD) {};
	\node at (\xlD, \yd) [node,fill=white] (ldD) {};
	\node at (\xmD, \yd) [node,fill=white] (mdD) {};
	\node at (\xnD, \yd) [node,fill=white] (ndD) {};
	
	\node at (\xoD, \yd) [node,fill=white] (odD) {};
	\node at (\xpD, \yd) [node,fill=white] (pdD) {};
	\node at (\xqD, \yd) [node,fill=white] (qdD) {};
	\node at (\xrD, \yd) [node,fill=white] (rdD) {};
	
	\node at (\xsD, \yd) [node,fill=white] (sdD) {};
	\node at (\xtD, \yd) [node,fill=white] (tdD) {};
	\node at (\xuD, \yd) [node,fill=white] (udD) {};
	\node at (\xvD, \yd) [node,fill=white] (vdD) {};
	
	\node at (\xwD, \yd) [node,fill=white] (wdD) {};
	\node at (\xxD, \yd) [node,fill=white] (xdD) {};
	\node at (\xyD, \yd) [node,fill=white] (ydD) {};
	\node at (\xzD, \yd) [node,fill=white] (zdD) {};
}
\def \drawEdgesTwo{
	\def\bottomdist{.87}
	
	\drawMultiColorEdge((\xgA, \yb),(\xgA, \yc))[S][U][.5][180]
	\drawMultiColorEdge((\xhA, \yb),(\xhA, \yc))[S][U][.5][180]
	\drawMultiColorEdge((\xiA, \yb),(\xiA, \yc))[S][U][.5][180]
	\drawMultiColorEdge((\xjA, \yb),(\xjA, \yc))[S][U][.5][180]
	
	\drawMultiColorEdge((\xgA+\secondRootOffset, \yb),(\xgA, \yc))[W][Y][.5][0]
	\drawMultiColorEdge((\xhA+\secondRootOffset, \yb),(\xhA, \yc))[W][Z][.5][0]
	\drawMultiColorEdge((\xiA+\secondRootOffset, \yb),(\xiA, \yc))[X][Y][.5][0]
	\drawMultiColorEdge((\xjA+\secondRootOffset, \yb),(\xjA, \yc))[X][Z][.5][0]
	
	\drawMultiColorEdge((\xgA, \yc),(\xkA, \yd))[S][W][\bottomdist][0]
	\drawMultiColorEdge((\xgA, \yc),(\xlA, \yd))[S][Y][\bottomdist][0]
	\drawMultiColorEdge((\xgA, \yc),(\xmA, \yd))[U][W][\bottomdist][0]
	\drawMultiColorEdge((\xgA, \yc),(\xnA, \yd))[U][Y][\bottomdist][0]
	
	\drawMultiColorEdge((\xhA, \yc),(\xoA, \yd))[S][W][\bottomdist][180]
	\drawMultiColorEdge((\xhA, \yc),(\xpA, \yd))[S][Z][\bottomdist][0]
	\drawMultiColorEdge((\xhA, \yc),(\xqA, \yd))[U][W][\bottomdist][0]
	\drawMultiColorEdge((\xhA, \yc),(\xrA, \yd))[U][Z][\bottomdist][0]
	
	\drawMultiColorEdge((\xiA, \yc),(\xsA, \yd))[S][X][\bottomdist][180]
	\drawMultiColorEdge((\xiA, \yc),(\xtA, \yd))[S][Y][\bottomdist][180]
	\drawMultiColorEdge((\xiA, \yc),(\xuA, \yd))[U][X][\bottomdist][0]
	\drawMultiColorEdge((\xiA, \yc),(\xvA, \yd))[U][Y][\bottomdist][0]
	
	\drawMultiColorEdge((\xjA, \yc),(\xwA, \yd))[S][X][\bottomdist][180]
	\drawMultiColorEdge((\xjA, \yc),(\xxA, \yd))[S][Z][\bottomdist][180]
	\drawMultiColorEdge((\xjA, \yc),(\xyA, \yd))[U][X][\bottomdist][180]
	\drawMultiColorEdge((\xjA, \yc),(\xzA, \yd))[U][Z][\bottomdist][0]
	
	\drawMultiColorEdge((\xgB, \yb),(\xgB, \yc))[S][V][.5][180]
	\drawMultiColorEdge((\xhB, \yb),(\xhB, \yc))[S][V][.5][180]
	\drawMultiColorEdge((\xuB, \yb),(\xiB, \yc))[S][V][.5][180]
	\drawMultiColorEdge((\xjB, \yb),(\xjB, \yc))[S][V][.5][180]
	
	\drawMultiColorEdge((\xgB+\secondRootOffset, \yb),(\xgB, \yc))[W][Y][.5][0]
	\drawMultiColorEdge((\xhB+\secondRootOffset, \yb),(\xhB, \yc))[W][Z][.5][0]
	\drawMultiColorEdge((\xiB+\secondRootOffset, \yb),(\xiB, \yc))[X][Y][.5][0]
	\drawMultiColorEdge((\xjB+\secondRootOffset, \yb),(\xjB, \yc))[X][Z][.5][0]
	
	\drawMultiColorEdge((\xgB, \yc),(\xkB, \yd))[S][W][\bottomdist][0]
	\drawMultiColorEdge((\xgB, \yc),(\xlB, \yd))[S][Y][\bottomdist][0]
	\drawMultiColorEdge((\xgB, \yc),(\xmB, \yd))[V][W][\bottomdist][0]
	\drawMultiColorEdge((\xgB, \yc),(\xnB, \yd))[V][Y][\bottomdist][0]
	
	\drawMultiColorEdge((\xhB, \yc),(\xoB, \yd))[S][W][\bottomdist][180]
	\drawMultiColorEdge((\xhB, \yc),(\xpB, \yd))[S][Z][\bottomdist][0]
	\drawMultiColorEdge((\xhB, \yc),(\xqB, \yd))[V][W][\bottomdist][0]
	\drawMultiColorEdge((\xhB, \yc),(\xrB, \yd))[V][Z][\bottomdist][0]
	
	\drawMultiColorEdge((\xiB, \yc),(\xsB, \yd))[S][X][\bottomdist][180]
	\drawMultiColorEdge((\xiB, \yc),(\xtB, \yd))[S][Y][\bottomdist][180]
	\drawMultiColorEdge((\xiB, \yc),(\xuB, \yd))[V][X][\bottomdist][0]
	\drawMultiColorEdge((\xiB, \yc),(\xvB, \yd))[V][Y][\bottomdist][0]
	
	\drawMultiColorEdge((\xjB, \yc),(\xwB, \yd))[S][X][\bottomdist][180]
	\drawMultiColorEdge((\xjB, \yc),(\xxB, \yd))[S][Z][\bottomdist][180]
	\drawMultiColorEdge((\xjB, \yc),(\xyB, \yd))[V][X][\bottomdist][180]
	\drawMultiColorEdge((\xjB, \yc),(\xzB, \yd))[V][Z][\bottomdist][0]
	
	\drawMultiColorEdge((\xgC, \yb),(\xgC, \yc))[T][U][.5][180]
	\drawMultiColorEdge((\xhC, \yb),(\xhC, \yc))[T][U][.5][180]
	\drawMultiColorEdge((\xiC, \yb),(\xiC, \yc))[T][U][.5][180]
	\drawMultiColorEdge((\xjC, \yb),(\xjC, \yc))[T][U][.5][180]
	
	\drawMultiColorEdge((\xgC+\secondRootOffset, \yb),(\xgC, \yc))[W][Y][.5][0]
	\drawMultiColorEdge((\xhC+\secondRootOffset, \yb),(\xhC, \yc))[W][Z][.5][0]
	\drawMultiColorEdge((\xiC+\secondRootOffset, \yb),(\xiC, \yc))[X][Y][.5][0]
	\drawMultiColorEdge((\xjC+\secondRootOffset, \yb),(\xjC, \yc))[X][Z][.5][0]
	
	\drawMultiColorEdge((\xgC, \yc),(\xkC, \yd))[T][W][\bottomdist][0]
	\drawMultiColorEdge((\xgC, \yc),(\xlC, \yd))[T][Y][\bottomdist][0]
	\drawMultiColorEdge((\xgC, \yc),(\xmC, \yd))[U][W][\bottomdist][0]
	\drawMultiColorEdge((\xgC, \yc),(\xnC, \yd))[U][Y][\bottomdist][0]
	
	\drawMultiColorEdge((\xhC, \yc),(\xoC, \yd))[T][W][\bottomdist][180]
	\drawMultiColorEdge((\xhC, \yc),(\xpC, \yd))[T][Z][\bottomdist][0]
	\drawMultiColorEdge((\xhC, \yc),(\xqC, \yd))[U][W][\bottomdist][0]
	\drawMultiColorEdge((\xhC, \yc),(\xrC, \yd))[U][Z][\bottomdist][0]
	
	\drawMultiColorEdge((\xiC, \yc),(\xsC, \yd))[T][X][\bottomdist][180]
	\drawMultiColorEdge((\xiC, \yc),(\xtC, \yd))[T][Y][\bottomdist][180]
	\drawMultiColorEdge((\xiC, \yc),(\xuC, \yd))[U][X][\bottomdist][0]
	\drawMultiColorEdge((\xiC, \yc),(\xvC, \yd))[U][Y][\bottomdist][0]
	
	\drawMultiColorEdge((\xjC, \yc),(\xwC, \yd))[T][X][\bottomdist][180]
	\drawMultiColorEdge((\xjC, \yc),(\xxC, \yd))[T][Z][\bottomdist][180]
	\drawMultiColorEdge((\xjC, \yc),(\xyC, \yd))[U][X][\bottomdist][180]
	\drawMultiColorEdge((\xjC, \yc),(\xzC, \yd))[U][Z][\bottomdist][0]
	
	\drawMultiColorEdge((\xgD, \yb),(\xgD, \yc))[T][V][.5][180]
	\drawMultiColorEdge((\xhD, \yb),(\xhD, \yc))[T][V][.5][180]
	\drawMultiColorEdge((\xiD, \yb),(\xiD, \yc))[T][V][.5][180]
	\drawMultiColorEdge((\xjD, \yb),(\xjD, \yc))[T][V][.5][180]
	
	\drawMultiColorEdge((\xgD+\secondRootOffset, \yb),(\xgD, \yc))[W][Y][.5][0]
	\drawMultiColorEdge((\xhD+\secondRootOffset, \yb),(\xhD, \yc))[W][Z][.5][0]
	\drawMultiColorEdge((\xiD+\secondRootOffset, \yb),(\xiD, \yc))[X][Y][.5][0]
	\drawMultiColorEdge((\xjD+\secondRootOffset, \yb),(\xjD, \yc))[X][Z][.5][0]
	
	\drawMultiColorEdge((\xgD, \yc),(\xkD, \yd))[T][W][\bottomdist][0]
	\drawMultiColorEdge((\xgD, \yc),(\xlD, \yd))[T][Y][\bottomdist][0]
	\drawMultiColorEdge((\xgD, \yc),(\xmD, \yd))[V][W][\bottomdist][0]
	\drawMultiColorEdge((\xgD, \yc),(\xnD, \yd))[V][Y][\bottomdist][0]
	
	\drawMultiColorEdge((\xhD, \yc),(\xoD, \yd))[T][W][\bottomdist][180]
	\drawMultiColorEdge((\xhD, \yc),(\xpD, \yd))[T][Z][\bottomdist][0]
	\drawMultiColorEdge((\xhD, \yc),(\xqD, \yd))[V][W][\bottomdist][0]
	\drawMultiColorEdge((\xhD, \yc),(\xrD, \yd))[V][Z][\bottomdist][0]
	
	\drawMultiColorEdge((\xiD, \yc),(\xsD, \yd))[T][X][\bottomdist][180]
	\drawMultiColorEdge((\xiD, \yc),(\xtD, \yd))[T][Y][\bottomdist][180]
	\drawMultiColorEdge((\xiD, \yc),(\xuD, \yd))[V][X][\bottomdist][0]
	\drawMultiColorEdge((\xiD, \yc),(\xvD, \yd))[V][Y][\bottomdist][0]
	
	\drawMultiColorEdge((\xjD, \yc),(\xwD, \yd))[T][X][\bottomdist][180]
	\drawMultiColorEdge((\xjD, \yc),(\xxD, \yd))[T][Z][\bottomdist][180]
	\drawMultiColorEdge((\xjD, \yc),(\xyD, \yd))[V][X][\bottomdist][180]
	\drawMultiColorEdge((\xjD, \yc),(\xzD, \yd))[V][Z][\bottomdist][0]
}
\def\drawLBConstructStepTwo{
	\begin{scope}
		\defineCoordinatesOne
		\drawEdgesOne
		\drawNodesOne
	\end{scope}
	
	\begin{scope}
		\defineCoordinatesTwo
		\drawEdgesTwo
		\drawNodesTwo
	\end{scope}
}
\def\defineCoordinatesThree{

	\def\ya{6}
	\def\yb{5}
	\def\yc{2}
	\def\yd{0}

	\FPeval{\xa}{0}
	\FPeval{\xb}{\xa+\xOneOffset}
	\FPeval{\xc}{\xb+\xOneOffset}
	\FPeval{\xd}{\xc+\xOneOffset}
	
	\FPeval{\xe}{\totalWidth/6}
	\FPeval{\xf}{5*\totalWidth/6}
	
	\FPeval{\xgA}{0}	
	\FPeval{\xhA}{\xgA+\xThreeOffset}	
	\FPeval{\xiA}{\xhA+\xThreeOffset}		
	\FPeval{\xjA}{\xiA+\xThreeOffset}	
	
	\FPeval{\xkA}{0}	
	\FPeval{\xlA}{\xkA+\xFourOffset}	
	\FPeval{\xmA}{\xlA+\xFourOffset}
	\FPeval{\xnA}{\xmA+\xFourOffset}
	
	\FPeval{\xoA}{\xnA + \xFourOffset}	
	\FPeval{\xpA}{\xoA + \xFourOffset}	
	\FPeval{\xqA}{\xpA+\xFourOffset}
	\FPeval{\xrA}{\xqA+\xFourOffset}
	
	\FPeval{\xsA}{\xrA+\xFourOffset}	
	\FPeval{\xtA}{\xsA+\xFourOffset}	
	\FPeval{\xuA}{\xtA+\xFourOffset}
	\FPeval{\xvA}{\xuA+\xFourOffset}
	
	\FPeval{\xwA}{\xvA+\xFourOffset}	
	\FPeval{\xxA}{\xwA+\xFourOffset}	
	\FPeval{\xyA}{\xxA+\xFourOffset}
	\FPeval{\xzA}{\xyA+\xFourOffset}
	
	\FPeval{\BOffset}{\instanceWidth+\xInstanceOffset}
	
	\def\xgB{\xgA+\BOffset}	
	\def\xhB{\xhA+\BOffset}	
	\def\xiB{\xiA+\BOffset}		
	\def\xjB{\xjA+\BOffset}	
	
	\def\xkB{\xkA+\BOffset}	
	\def\xlB{\xlA+\BOffset}	
	\def\xmB{\xmA+\BOffset}
	\def\xnB{\xnA+\BOffset}
	
	\def\xoB{\xoA + \BOffset}	
	\def\xpB{\xpA + \BOffset}	
	\def\xqB{\xqA+\BOffset}
	\def\xrB{\xrA+\BOffset}
	
	\def\xsB{\xsA+\BOffset}	
	\def\xtB{\xtA+\BOffset}	
	\def\xuB{\xuA+\BOffset}
	\def\xvB{\xvA+\BOffset}
	
	\def\xwB{\xwA+\BOffset}	
	\def\xxB{\xxA+\BOffset}	
	\def\xyB{\xyA+\BOffset}
	\def\xzB{\xzA+\BOffset}
	
	\def\COffset{2*\instanceWidth+2*\xInstanceOffset}
	
	\def\xgC{\xgA+\COffset}	
	\def\xhC{\xhA+\COffset}	
	\def\xiC{\xiA+\COffset}		
	\def\xjC{\xjA+\COffset}	
	
	\def\xkC{\xkA+\COffset}	
	\def\xlC{\xlA+\COffset}	
	\def\xmC{\xmA+\COffset}
	\def\xnC{\xnA+\COffset}
	
	\def\xoC{\xoA + \COffset}	
	\def\xpC{\xpA + \COffset}	
	\def\xqC{\xqA+\COffset}
	\def\xrC{\xrA+\COffset}
	
	\def\xsC{\xsA+\COffset}	
	\def\xtC{\xtA+\COffset}	
	\def\xuC{\xuA+\COffset}
	\def\xvC{\xvA+\COffset}
	
	\def\xwC{\xwA+\COffset}	
	\def\xxC{\xxA+\COffset}	
	\def\xyC{\xyA+\COffset}
	\def\xzC{\xzA+\COffset}
	
	\def\DOffset{3*\instanceWidth+3*\xInstanceOffset}
	
	\def\xgD{\xgA+\DOffset}	
	\def\xhD{\xhA+\DOffset}	
	\def\xiD{\xiA+\DOffset}		
	\def\xjD{\xjA+\DOffset}	
	
	\def\xkD{\xkA+\DOffset}	
	\def\xlD{\xlA+\DOffset}	
	\def\xmD{\xmA+\DOffset}
	\def\xnD{\xnA+\DOffset}
	
	\def\xoD{\xoA + \DOffset}	
	\def\xpD{\xpA + \DOffset}	
	\def\xqD{\xqA+\DOffset}
	\def\xrD{\xrA+\DOffset}
	
	\def\xsD{\xsA+\DOffset}	
	\def\xtD{\xtA+\DOffset}	
	\def\xuD{\xuA+\DOffset}
	\def\xvD{\xvA+\DOffset}
	
	\def\xwD{\xwA+\DOffset}	
	\def\xxD{\xxA+\DOffset}	
	\def\xyD{\xyA+\DOffset}
	\def\xzD{\xzA+\DOffset}
}
\def \drawNodesThree{
	
	\node at (\xa, \ya) [node] (aa) {};
	\node at (\xb, \ya) [node] (ba) {};
	\node at (\xc, \ya) [node] (ca) {};
	\node at (\xd, \ya) [node] (da) {};
	
	\node at (\xe, \yb) [node,fill=white] (eb) {};
	\node at (\xf, \yb) [node,fill=white] (fb) {};
	
	\node at (\xgA, \yc) [node,fill=white] (gcA) {};
	\node at (\xhA, \yc) [node,fill=white] (hcA) {};
	\node at (\xiA, \yc) [node,fill=white] (icA) {};
	\node at (\xjA, \yc) [node,fill=white] (jcA) {};
	
	\node at (\xkA, \yd) [node,fill=white] (kdA) {};
	\node at (\xlA, \yd) [node,fill=white] (ldA) {};
	\node at (\xmA, \yd) [node,fill=white] (mdA) {};
	\node at (\xnA, \yd) [node,fill=white] (ndA) {};
	
	\node at (\xoA, \yd) [node,fill=white] (odA) {};
	\node at (\xpA, \yd) [node,fill=white] (pdA) {};
	\node at (\xqA, \yd) [node,fill=white] (qdA) {};
	\node at (\xrA, \yd) [node,fill=white] (rdA) {};
	
	\node at (\xsA, \yd) [node,fill=white] (sdA) {};
	\node at (\xtA, \yd) [node,fill=white] (tdA) {};
	\node at (\xuA, \yd) [node,fill=white] (udA) {};
	\node at (\xvA, \yd) [node,fill=white] (vdA) {};
	
	\node at (\xwA, \yd) [node,fill=white] (wdA) {};
	\node at (\xxA, \yd) [node,fill=white] (xdA) {};
	\node at (\xyA, \yd) [node,fill=white] (ydA) {};
	\node at (\xzA, \yd) [node,fill=white] (zdA) {};
	
	\node at (\xgB, \yc) [node,fill=white] (gcB) {};
	\node at (\xhB, \yc) [node,fill=white] (hcB) {};
	\node at (\xiB, \yc) [node,fill=white] (icB) {};
	\node at (\xjB, \yc) [node,fill=white] (jcB) {};
	
	\node at (\xkB, \yd) [node,fill=white] (kdB) {};
	\node at (\xlB, \yd) [node,fill=white] (ldB) {};
	\node at (\xmB, \yd) [node,fill=white] (mdB) {};
	\node at (\xnB, \yd) [node,fill=white] (ndB) {};
	
	\node at (\xoB, \yd) [node,fill=white] (odB) {};
	\node at (\xpB, \yd) [node,fill=white] (pdB) {};
	\node at (\xqB, \yd) [node,fill=white] (qdB) {};
	\node at (\xrB, \yd) [node,fill=white] (rdB) {};
	
	\node at (\xsB, \yd) [node,fill=white] (sdB) {};
	\node at (\xtB, \yd) [node,fill=white] (tdB) {};
	\node at (\xuB, \yd) [node,fill=white] (udB) {};
	\node at (\xvB, \yd) [node,fill=white] (vdB) {};
	
	\node at (\xwB, \yd) [node,fill=white] (wdB) {};
	\node at (\xxB, \yd) [node,fill=white] (xdB) {};
	\node at (\xyB, \yd) [node,fill=white] (ydB) {};
	\node at (\xzB, \yd) [node,fill=white] (zdB) {};
	
	\node at (\xgC, \yc) [node,fill=white] (gcC) {};
	\node at (\xhC, \yc) [node,fill=white] (hcC) {};
	\node at (\xiC, \yc) [node,fill=white] (icC) {};
	\node at (\xjC, \yc) [node,fill=white] (jcC) {};
	
	\node at (\xkC, \yd) [node,fill=white] (kdC) {};
	\node at (\xlC, \yd) [node,fill=white] (ldC) {};
	\node at (\xmC, \yd) [node,fill=white] (mdC) {};
	\node at (\xnC, \yd) [node,fill=white] (ndC) {};
	
	\node at (\xoC, \yd) [node,fill=white] (odC) {};
	\node at (\xpC, \yd) [node,fill=white] (pdC) {};
	\node at (\xqC, \yd) [node,fill=white] (qdC) {};
	\node at (\xrC, \yd) [node,fill=white] (rdC) {};
	
	\node at (\xsC, \yd) [node,fill=white] (sdC) {};
	\node at (\xtC, \yd) [node,fill=white] (tdC) {};
	\node at (\xuC, \yd) [node,fill=white] (udC) {};
	\node at (\xvC, \yd) [node,fill=white] (vdC) {};
	
	\node at (\xwC, \yd) [node,fill=white] (wdC) {};
	\node at (\xxC, \yd) [node,fill=white] (xdC) {};
	\node at (\xyC, \yd) [node,fill=white] (ydC) {};
	\node at (\xzC, \yd) [node,fill=white] (zdC) {};
	
	\node at (\xgD, \yc) [node,fill=white] (gcD) {};
	\node at (\xhD, \yc) [node,fill=white] (hcD) {};
	\node at (\xiD, \yc) [node,fill=white] (icD) {};
	\node at (\xjD, \yc) [node,fill=white] (jcD) {};
	
	\node at (\xkD, \yd) [node,fill=white] (kdD) {};
	\node at (\xlD, \yd) [node,fill=white] (ldD) {};
	\node at (\xmD, \yd) [node,fill=white] (mdD) {};
	\node at (\xnD, \yd) [node,fill=white] (ndD) {};
	
	\node at (\xoD, \yd) [node,fill=white] (odD) {};
	\node at (\xpD, \yd) [node,fill=white] (pdD) {};
	\node at (\xqD, \yd) [node,fill=white] (qdD) {};
	\node at (\xrD, \yd) [node,fill=white] (rdD) {};
	
	\node at (\xsD, \yd) [node,fill=white] (sdD) {};
	\node at (\xtD, \yd) [node,fill=white] (tdD) {};
	\node at (\xuD, \yd) [node,fill=white] (udD) {};
	\node at (\xvD, \yd) [node,fill=white] (vdD) {};
	
	\node at (\xwD, \yd) [node,fill=white] (wdD) {};
	\node at (\xxD, \yd) [node,fill=white] (xdD) {};
	\node at (\xyD, \yd) [node,fill=white] (ydD) {};
	\node at (\xzD, \yd) [node,fill=white] (zdD) {};
}
\def \drawEdgesThree{
	\def\bottomdist{.87}
	
	\drawMultiColorEdge((\xa, \ya),(\xe, \yb))[S][T][.5][0]
	\drawMultiColorEdge((\xb, \ya),(\xe, \yb))[U][V][.5][0]
	\drawMultiColorEdge((\xc, \ya),(\xf, \yb))[W][X][.5][0]
	\drawMultiColorEdge((\xd, \ya),(\xf, \yb))[Y][Z][.5][0]
	
	\drawMultiColorEdge((\xe, \yb),(\xgA, \yc))[S][U][.5][180]
	\drawMultiColorEdge((\xe, \yb),(\xhA, \yc))[S][U][.5][180]
	\drawMultiColorEdge((\xe, \yb),(\xiA, \yc))[S][U][.5][180]
	\drawMultiColorEdge((\xe, \yb),(\xjA, \yc))[S][U][.5][0]
	
	\drawMultiColorEdge((\xf, \yb),(\xgA, \yc))[W][Y][.95][0]
	\drawMultiColorEdge((\xf, \yb),(\xhA, \yc))[W][Z][.95][0]
	\drawMultiColorEdge((\xf, \yb),(\xiA, \yc))[X][Y][.95][0]
	\drawMultiColorEdge((\xf, \yb),(\xjA, \yc))[X][Z][.965][0]
	
	\drawMultiColorEdge((\xgA, \yc),(\xkA, \yd))[S][W][\bottomdist][0]
	\drawMultiColorEdge((\xgA, \yc),(\xlA, \yd))[S][Y][\bottomdist][0]
	\drawMultiColorEdge((\xgA, \yc),(\xmA, \yd))[U][W][\bottomdist][0]
	\drawMultiColorEdge((\xgA, \yc),(\xnA, \yd))[U][Y][\bottomdist][0]
	
	\drawMultiColorEdge((\xhA, \yc),(\xoA, \yd))[S][W][\bottomdist][180]
	\drawMultiColorEdge((\xhA, \yc),(\xpA, \yd))[S][Z][\bottomdist][0]
	\drawMultiColorEdge((\xhA, \yc),(\xqA, \yd))[U][W][\bottomdist][0]
	\drawMultiColorEdge((\xhA, \yc),(\xrA, \yd))[U][Z][\bottomdist][0]
	
	\drawMultiColorEdge((\xiA, \yc),(\xsA, \yd))[S][X][\bottomdist][180]
	\drawMultiColorEdge((\xiA, \yc),(\xtA, \yd))[S][Y][\bottomdist][180]
	\drawMultiColorEdge((\xiA, \yc),(\xuA, \yd))[U][X][\bottomdist][0]
	\drawMultiColorEdge((\xiA, \yc),(\xvA, \yd))[U][Y][\bottomdist][0]
	
	\drawMultiColorEdge((\xjA, \yc),(\xwA, \yd))[S][X][\bottomdist][180]
	\drawMultiColorEdge((\xjA, \yc),(\xxA, \yd))[S][Z][\bottomdist][180]
	\drawMultiColorEdge((\xjA, \yc),(\xyA, \yd))[U][X][\bottomdist][180]
	\drawMultiColorEdge((\xjA, \yc),(\xzA, \yd))[U][Z][\bottomdist][0]
	
	\drawMultiColorEdge((\xe, \yb),(\xgB, \yc))[S][V][.5][0]
	\drawMultiColorEdge((\xe, \yb),(\xhB, \yc))[S][V][.5][0]
	\drawMultiColorEdge((\xe, \yb),(\xiB, \yc))[S][V][.5][0]
	\drawMultiColorEdge((\xe, \yb),(\xjB, \yc))[S][V][.45][0]
	
	\drawMultiColorEdge((\xf, \yb),(\xgB, \yc))[W][Y][.95][0]
	\drawMultiColorEdge((\xf, \yb),(\xhB, \yc))[W][Z][.95][0]
	\drawMultiColorEdge((\xf, \yb),(\xiB, \yc))[X][Y][.95][0]
	\drawMultiColorEdge((\xf, \yb),(\xjB, \yc))[X][Z][.45][0]
	
	\drawMultiColorEdge((\xgB, \yc),(\xkB, \yd))[S][W][\bottomdist][0]
	\drawMultiColorEdge((\xgB, \yc),(\xlB, \yd))[S][Y][\bottomdist][0]
	\drawMultiColorEdge((\xgB, \yc),(\xmB, \yd))[V][W][\bottomdist][0]
	\drawMultiColorEdge((\xgB, \yc),(\xnB, \yd))[V][Y][\bottomdist][0]
	
	\drawMultiColorEdge((\xhB, \yc),(\xoB, \yd))[S][W][\bottomdist][180]
	\drawMultiColorEdge((\xhB, \yc),(\xpB, \yd))[S][Z][\bottomdist][0]
	\drawMultiColorEdge((\xhB, \yc),(\xqB, \yd))[V][W][\bottomdist][0]
	\drawMultiColorEdge((\xhB, \yc),(\xrB, \yd))[V][Z][\bottomdist][0]
	
	\drawMultiColorEdge((\xiB, \yc),(\xsB, \yd))[S][X][\bottomdist][180]
	\drawMultiColorEdge((\xiB, \yc),(\xtB, \yd))[S][Y][\bottomdist][180]
	\drawMultiColorEdge((\xiB, \yc),(\xuB, \yd))[V][X][\bottomdist][0]
	\drawMultiColorEdge((\xiB, \yc),(\xvB, \yd))[V][Y][\bottomdist][0]
	
	\drawMultiColorEdge((\xjB, \yc),(\xwB, \yd))[S][X][\bottomdist][180]
	\drawMultiColorEdge((\xjB, \yc),(\xxB, \yd))[S][Z][\bottomdist][180]
	\drawMultiColorEdge((\xjB, \yc),(\xyB, \yd))[V][X][\bottomdist][180]
	\drawMultiColorEdge((\xjB, \yc),(\xzB, \yd))[V][Z][\bottomdist][0]
	
	\drawMultiColorEdge((\xe, \yb),(\xgC, \yc))[T][U][.9][0]
	\drawMultiColorEdge((\xe, \yb),(\xhC, \yc))[T][U][.95][0]
	\drawMultiColorEdge((\xe, \yb),(\xiC, \yc))[T][U][.95][0]
	\drawMultiColorEdge((\xe, \yb),(\xjC, \yc))[T][U][.95][0]
	
	\drawMultiColorEdge((\xf, \yb),(\xgC, \yc))[W][Y][.5][0]
	\drawMultiColorEdge((\xf, \yb),(\xhC, \yc))[W][Z][.5][0]
	\drawMultiColorEdge((\xf, \yb),(\xiC, \yc))[X][Y][.5][0]
	\drawMultiColorEdge((\xf, \yb),(\xjC, \yc))[X][Z][.5][0]
	
	\drawMultiColorEdge((\xgC, \yc),(\xkC, \yd))[T][W][\bottomdist][0]
	\drawMultiColorEdge((\xgC, \yc),(\xlC, \yd))[T][Y][\bottomdist][0]
	\drawMultiColorEdge((\xgC, \yc),(\xmC, \yd))[U][W][\bottomdist][0]
	\drawMultiColorEdge((\xgC, \yc),(\xnC, \yd))[U][Y][\bottomdist][0]
	
	\drawMultiColorEdge((\xhC, \yc),(\xoC, \yd))[T][W][\bottomdist][180]
	\drawMultiColorEdge((\xhC, \yc),(\xpC, \yd))[T][Z][\bottomdist][0]
	\drawMultiColorEdge((\xhC, \yc),(\xqC, \yd))[U][W][\bottomdist][0]
	\drawMultiColorEdge((\xhC, \yc),(\xrC, \yd))[U][Z][\bottomdist][0]
	
	\drawMultiColorEdge((\xiC, \yc),(\xsC, \yd))[T][X][\bottomdist][180]
	\drawMultiColorEdge((\xiC, \yc),(\xtC, \yd))[T][Y][\bottomdist][180]
	\drawMultiColorEdge((\xiC, \yc),(\xuC, \yd))[U][X][\bottomdist][0]
	\drawMultiColorEdge((\xiC, \yc),(\xvC, \yd))[U][Y][\bottomdist][0]
	
	\drawMultiColorEdge((\xjC, \yc),(\xwC, \yd))[T][X][\bottomdist][180]
	\drawMultiColorEdge((\xjC, \yc),(\xxC, \yd))[T][Z][\bottomdist][180]
	\drawMultiColorEdge((\xjC, \yc),(\xyC, \yd))[U][X][\bottomdist][180]
	\drawMultiColorEdge((\xjC, \yc),(\xzC, \yd))[U][Z][\bottomdist][0]
	
	\drawMultiColorEdge((\xe, \yb),(\xgD, \yc))[T][V][.965][0]
	\drawMultiColorEdge((\xe, \yb),(\xhD, \yc))[T][V][.95][0]
	\drawMultiColorEdge((\xe, \yb),(\xiD, \yc))[T][V][.95][0]
	\drawMultiColorEdge((\xe, \yb),(\xjD, \yc))[T][V][.95][0]
	
	\drawMultiColorEdge((\xf, \yb),(\xgD, \yc))[W][Y][.5][180]
	\drawMultiColorEdge((\xf, \yb),(\xhD, \yc))[W][Z][.5][0]
	\drawMultiColorEdge((\xf, \yb),(\xiD, \yc))[X][Y][.5][0]
	\drawMultiColorEdge((\xf, \yb),(\xjD, \yc))[X][Z][.5][0]
	
	\drawMultiColorEdge((\xgD, \yc),(\xkD, \yd))[T][W][\bottomdist][0]
	\drawMultiColorEdge((\xgD, \yc),(\xlD, \yd))[T][Y][\bottomdist][0]
	\drawMultiColorEdge((\xgD, \yc),(\xmD, \yd))[V][W][\bottomdist][0]
	\drawMultiColorEdge((\xgD, \yc),(\xnD, \yd))[V][Y][\bottomdist][0]
	
	\drawMultiColorEdge((\xhD, \yc),(\xoD, \yd))[T][W][\bottomdist][180]
	\drawMultiColorEdge((\xhD, \yc),(\xpD, \yd))[T][Z][\bottomdist][0]
	\drawMultiColorEdge((\xhD, \yc),(\xqD, \yd))[V][W][\bottomdist][0]
	\drawMultiColorEdge((\xhD, \yc),(\xrD, \yd))[V][Z][\bottomdist][0]
	
	\drawMultiColorEdge((\xiD, \yc),(\xsD, \yd))[T][X][\bottomdist][180]
	\drawMultiColorEdge((\xiD, \yc),(\xtD, \yd))[T][Y][\bottomdist][180]
	\drawMultiColorEdge((\xiD, \yc),(\xuD, \yd))[V][X][\bottomdist][0]
	\drawMultiColorEdge((\xiD, \yc),(\xvD, \yd))[V][Y][\bottomdist][0]
	
	\drawMultiColorEdge((\xjD, \yc),(\xwD, \yd))[T][X][\bottomdist][180]
	\drawMultiColorEdge((\xjD, \yc),(\xxD, \yd))[T][Z][\bottomdist][180]
	\drawMultiColorEdge((\xjD, \yc),(\xyD, \yd))[V][X][\bottomdist][180]
	\drawMultiColorEdge((\xjD, \yc),(\xzD, \yd))[V][Z][\bottomdist][0]
}
\def\drawLBConstructStepThree{
	\begin{scope}
	\defineCoordinatesThree
	\drawEdgesThree
	\drawNodesThree
	\end{scope}
}
\def \drawNodesSmallNoGuess{
	\node at (\xa, \yaS) [nodeSmall,text=white,inner sep=0pt] (aa) {\small $r_1$};
	\node at (\xb, \yaS) [nodeSmall,text=white, inner sep=0pt] (ba) {\small $r_2$};
	
	\node at (\xe, \ybS) [nodeSmall,fill=white,inner sep=0pt] (eb) {\small $v$};
	
	\node at (\xg, \ycS) [nodeSmall,fill=white,inner sep=0pt] (gc) {\small $u$};
}
\def \drawNodesSmallGuess{
	\node at (\xa, \yaS) [nodeSmall,text=white,inner sep=0pt] (aa) {\small $r_1$};
	\node at (\xb, \yaS) [nodeSmall,text=white, inner sep=0pt] (ba) {\small $r_2$};
	
	\node at (\xe, \ybS) [nodeSmall,fill=white,inner sep=0pt] (eb) {\small $v$};
	
	\node at (\xg, \ycS) [nodeSmall,fill=white,inner sep=0pt] (gc) {\small $u'$};
	\node at (\xh, \ycS) [nodeSmall,fill=white] (hc) {};
	\node at (\xi, \ycS) [nodeSmall,fill=white] (ic) {};
	\node at (\xj, \ycS) [nodeSmall,fill=white] (jc) {};
}
\def \drawEdgesSmallNoGuess{
	\drawMultiColorEdgeSmall((\xa, \yaS),(\xe, \ybS))[S][T][.5][0]
	\drawMultiColorEdgeSmall((\xb, \yaS),(\xe, \ybS))[U][V][.5][0]
	
	\drawMultiColorEdgeSmall((\xg, \ycS),(\xe, \ybS))[S][U][.5][0]
}
\def \drawEdgesSmallGuess{
	\drawMultiColorEdgeSmall((\xa, \yaS),(\xe, \ybS))[S][T][.5][0]
	\drawMultiColorEdgeSmall((\xb, \yaS),(\xe, \ybS))[U][V][.5][0]
	
	\drawMultiColorEdgeSmall((\xe, \ybS),(\xg, \ycS))[S][U][.5][0]
	\drawMultiColorEdgeSmall((\xe, \ybS),(\xh, \ycS))[S][V][.5][0]
	\drawMultiColorEdgeSmall((\xe, \ybS),(\xi, \ycS))[T][U][.5][0]
	\drawMultiColorEdgeSmall((\xe, \ybS),(\xj, \ycS))[T][V][.5][0]
}
\def \drawLBFigSmallNoGuess{
	\begin{scope}
	\def\yaS{10}
	\def\ybS{9}
	\def\ycS{7}
	
	\def\xa{0}
	\def\xb{4}
	
	\def\xe{2}

	\def\xg{2}
	
	\drawEdgesSmallNoGuess
	\drawNodesSmallNoGuess
	\end{scope}
}
\def \drawLBFigSmallGuess{
	\begin{scope}
	\def\yaS{3}
	\def\ybS{2}
	\def\ycS{0}
	
	\def\xa{0}
	\def\xb{4}
	
	\def\xe{2}

	\def\xg{0}	
	\def\xh{1.33}	
	\def\xi{2.66}		
	\def\xj{4}	
	
	\drawEdgesSmallGuess
	\drawNodesSmallGuess
	
	\end{scope}
}
\def \drawNodesSmallest{
	\node at (\xa, \yaS) [nodeSmall] (aa) {};
	\node at (\xb, \ybS) [nodeSmall, fill=white] (ba) {};
	
}
\def \drawEdgesSmallest{
	\drawMultiColorEdgeSmall((\xa, \ybS),(\xa, \yaS))[S][T][.5][0]
}
\def \drawLBFigSmallest{
	\begin{scope}
	
	\def\yaS{1.5}
	\def\ybS{0}
	
	\def\xa{2}
	\def\xb{2}
	
	\drawEdgesSmallest
	\drawNodesSmallest
	
	\end{scope}
}
\tikzstyle{UBnode}=[circle,draw,fill=white, scale=.8, line width=.5mm]
\tikzstyle{edge}=[line width = .15 cm]
\tikzstyle{highlightEdge} = [line width = 1.5mm,densely dashed]
\definecolor{c0}{HTML}{FFFFFF}	
\definecolor{A}{HTML}{4363D8}	
\definecolor{B}{HTML}{00E6E6}	
\definecolor{C}{HTML}{FBBC05}	
\definecolor{D}{HTML}{FF8000}	
\definecolor{E}{HTML}{34A853}	
\definecolor{F}{HTML}{B12DD2}	
\definecolor{G}{HTML}{EA4335}	
\definecolor{H}{HTML}{800000}	
\def\ya{10}
\def\yb{9.5}
\def\yc{9}
\def\yd{8.5}
\def\ye{8}
\def\yf{7.5}
\def\xa{0}
\def\xb{-.5}
\def\xc{.5}
\def\xd{-1}
\def\xe{0}
\def\xf{1}
\def\xg{-1.5}	
\def\xh{-.5}
\def\xi{.5}
\def\xj{-2}	
\def\xk{-1}	
\def\xl{0}
\def\xm{1}
\def\xn{-.5}
\def\xo{.5}	
\def\xp{1.5}
\def\drawMulticastNodes[#1]{
	\node at (\xa, \ya) [UBnode,fill=black] (aa) {}; 
	
	\node at (\xb, \yb) [UBnode,fill=#1] (bb) {};
	\node at (\xc, \yb) [UBnode,fill=#1] (cb) {};
	
	\node at (\xd, \yc) [UBnode,fill=#1] (dc) {};
	\node at (\xe, \yc) [UBnode] (ec) {};
	\node at (\xf, \yc) [UBnode,fill=#1] (fc) {};
	
	\node at (\xg, \yd) [UBnode,fill=#1] (gd) {};
	\node at (\xh, \yd) [UBnode] (hd) {};
	\node at (\xi, \yd) [UBnode,fill=#1] (id) {};
	
	\node at (\xj, \ye) [UBnode,fill=#1] (je) {};
	\node at (\xk, \ye) [UBnode] (ke) {};
	\node at (\xl, \ye) [UBnode] (le) {};
	\node at (\xm, \ye) [UBnode,fill=#1] (me) {};
	
	\node at (\xn, \yf) [UBnode] (nf) {};
	\node at (\xo, \yf) [UBnode] (of) {};
	\node at (\xp, \yf) [UBnode] (pf) {};
}
\def\drawMulticastEdges[#1,#2,#3,#4,#5,#6,#7,#8]{
	\draw[edge,color=#1](aa) -- (bb);
	\draw[edge,color=#2](aa) -- (cb);
	
	\draw[edge,color=#1](bb) -- (dc);
	\draw[edge,color=#3](bb) -- (ec);
	\draw[edge,color=#2](cb) -- (fc);
	
	\draw[edge,color=#1](dc) -- (gd);
	\draw[edge,color=#4](dc) -- (hd);
	\draw[edge,color=#2](fc) -- (id);
	
	\draw[edge,color=#1](gd) -- (je);
	\draw[edge,color=#5](gd) -- (ke);
	\draw[edge,color=#6](id) -- (le);
	\draw[edge,color=#2](id) -- (me);
	
	\draw[edge,color=#6](le) -- (nf);
	\draw[edge,color=#7](me) -- (of);
	\draw[edge,color=#8](me) -- (pf);
}
\def\drawSchedulingHighlightsB{
	\draw [highlightEdge](aa) -- node [midway,fill=none,above,sloped,pos =.5] {$P_1$} (bb) --  (dc) -- (gd) -- (je) ;
	\draw [highlightEdge](aa) -- node [midway,fill=none,above,sloped,pos =.5] {$P_2$} (cb) -- (fc)  -- (id) -- (me) ;
}
\def\drawSchedulingHighlightsC{
	\draw [highlightEdge](bb) -- (ec) node [above,sloped,pos=.5] {$P_3$};
	
	\draw [highlightEdge](dc) -- (hd) node [above,sloped,pos=.5] {$P_4$};
	
	\draw [highlightEdge](gd) -- (ke) node [above,sloped,pos=.5] {$P_5$};
	
	\draw [highlightEdge](id) -- (le) node[above, sloped, pos=.5] {$P_6$} -- (nf) ;
	
	\draw [highlightEdge](me) -- (of) node [above,sloped,pos=.5] {$P_7$};
	
	\draw [highlightEdge](me) -- (pf) node [above,sloped,pos=.5] {$P_8$};
}
\def \drawUBGraph{
	\drawMulticastNodes[white]
	\drawMulticastEdges[black,black,black,black,black,black,black,black]
}
\def \drawUBHeavyLight{
	\drawMulticastNodes[white]
	\drawMulticastEdges[A,B,C,D,E,F,G,B]
}
\def \drawUBChopPaths{
	\drawMulticastNodes[white]
	\drawMulticastEdges[A,B,C,D,E,F,G,H]
}
\def \drawUBScheduleA{
	\drawMulticastNodes[white]
	\drawMulticastEdges[A,B,C,D,E,F,G,H]
}
\def \drawUBScheduleB{
	\drawMulticastNodes[white]
	\drawMulticastEdges[A,B,C,D,E,F,G,H]
	\drawSchedulingHighlightsB
}
\def \drawUBScheduleC{
	\drawMulticastNodes[black]
	\drawMulticastEdges[A,B,C,D,E,F,G,H]
	\drawSchedulingHighlightsC
}
\newcommand{\sm}{\text{simultaneous multicast}\xspace}
\newcommand{\sms}{\text{simultaneous multicasts}\xspace}
\newcommand{\SM}{\text{Simultaneous Multicast}\xspace}
\newcommand{\su}{\text{simultaneous unicast}\xspace}
\newcommand{\sus}{\text{simultaneous unicasts}\xspace}
\newcommand{\CONGEST}{\textsc{CONGEST}\xspace}
\begin{document}
\title{Near-Optimal Schedules for Simultaneous Multicasts\thanks{*Supported in part by NSF grants CCF-1527110, CCF-1618280, CCF-1814603, CCF-1910588, NSF CAREER award CCF- 1750808, a Sloan Research Fellowship, funding from the European Research Council (ERC) under the European Union’s Horizon 2020 research and innovation program (ERC grant agreement 949272), Swiss National Foundation (project grant 200021-184735)
        and the Air Force Office of Scientific Research under award number FA9550-20-1-0080. Much of this work was done while the third author was at Carnegie Mellon.} }

\newcommand*\samethanks[1][\value{footnote}]{\footnotemark[#1]}

\author[1]{Bernhard Haeupler}
\author[1]{D Ellis Hershkowitz}
\author[2]{David Wajc}

\affil[1]{Carnegie Mellon University\\
	{\{haeupler, dhershko\}@cs.cmu.edu}}

\affil[2]{Stanford University\\
    {wajc@cs.cmu.edu}}
\date{}
\maketitle

\begin{abstract}

We study the store-and-forward packet routing problem for  simultaneous multicasts, in which multiple packets have to be forwarded along given trees as fast as possible. 

\medskip

This is a natural generalization of the seminal work of Leighton, Maggs and Rao, which solved this problem for unicasts, i.e.\ the case where all trees are paths. They showed the existence of asymptotically optimal $O(C + D)$-length schedules, where the congestion $C$ is the maximum number of packets sent over an edge and the dilation $D$ is the maximum depth of a tree. This improves over the trivial $O(CD)$ length schedules.

\medskip

We prove a lower bound for multicasts, which shows that there do not always exist schedules of non-trivial length, $o(CD)$. On the positive side, we construct $O(C+D+\log^2 n)$-length schedules in any $n$-node network. These schedules are near-optimal, since our lower bound shows that this length cannot be improved to $O(C+D) + o(\log n)$. 


\end{abstract}

\pagenumbering{gobble}

\maketitle
\newpage 

\pagenumbering{arabic}
\section{Introduction}

We study how to efficiently schedule multiple simultaneous multicasts in the store-and-forward model. 

\medskip

Unicasts and multicasts are two of the most basic and important information dissemination primitives in modern communication networks. In a unicast a source sends information to a receiver and in a multicast a source sends information to several receivers. Typically, many such primitives are run simultaneously, causing these primitives to contend for the same resources, most notably the bandwidth of communication links. 

\medskip

The store-and-forward model has been the classic model for developing a clean theoretical understanding of how to most efficiently schedule many such primitives contending for the same link bandwidth. In the store-and-forward model, a network is modeled as a simple undirected graph $G = (V, E)$ with $n$ nodes. Time proceeds in synchronous rounds during which nodes trade packets. In each round a node can send packets it holds to neighbors in $G$, but at most one packet is allowed to be sent along an edge in each round. Nodes can copy packets and send duplicate packets to neighbors, again subject to the constraint that at most one packet crosses an edge each round.\footnote{The assumption that nodes can copy and broadcast packets reflects how standard IP routing works in practice \cite{deering1988host}.}

\medskip

The store-and-forward model, in turn, enables a formal definition of the problem of scheduling many \sms or unicasts. A \sm instance is given by a set of rooted trees $\mcT$---one for each multicast---on a store-and-forward network $G$. Each $T_i \in \mcT$ has root $r_i$ and leaves $L_i$ along with a packet (a.k.a.\ message) $m_i$, initially only known to $r_i$.  
A schedule instructs nodes what packets to send in which rounds, subject to the constraint that $m_i$ can only be sent over edges in $T_i$.
The quality of a schedule is its \emph{length}; i.e., the number of rounds until all nodes in $L_i$ have received $m_i$ for every $i$.
A \su instance is the simple case of a \sm where all $T_i$ are paths. The goal of past work and this work is to understand the length of the shortest schedule.

\medskip


The most important parameters in understanding the length of the shortest schedule has been the \emph{congestion} $C = \max_e |\{T_i \ni e\}|$, i.e., the maximum number of packets that need to be routed over any edge in $G$ and the \emph{dilation} $D = \max_i \text{depth}(T_i)$, i.e., the maximum depth of any multicast-tree or the maximum length of any path in the case of \su. It is easy to see that any schedule requires at least $\max(C, D) = \Omega(C + D)$ rounds: a tree with depth $D$ requires at least $D$ rounds to deliver its message and any edge with congestion $C$ requires at least $C$ rounds to forward all packets that need to be sent over it. On the other hand, any instance can easily be scheduled in $O(CD)$ rounds in a greedy manner: in each round and for each edge $e=(u, v)$, forward $m_i$ from $u$ to $v$ where $T_i$ is an arbitrary tree such that $e \in T_i$ and $u$ knows $m_i$ but $v$ does not; it is easy to verify that this schedule takes $O(CD)$ rounds.


\medskip

Classic results of Leighton, Maggs, and Rao \cite{leighton1994packet} improve upon this trivial $O(CD)$ bound for the case of \su. They showed that  introducing a simple independent random delay for each packet at its source suffices to obtain schedules of length $O(C+D\cdot \log n)$ or $O((C+D)\cdot \frac{\log n}{\log \log n})$. A similar strategy can be shown to also work for simultaneous multicasts \cite{ghaffari2015nearb}. More surprisingly, Leighton et al.~show how an intricate repeated application of the Lov\'asz Local Lemma \cite{alon2016probabilistic} proves the existence of length $O(C + D)$ for any \su instance. This seminal paper initiated a long line of followup work \cite{rothvoss2013simpler,peis2011universal,srinivasan2001constant,bertsimas1999asymptotically,koch2009real,busch2004direct,peis2009packet,rabani1996distributed,ostrovsky1997universal,auf1999shortest,ghaffari2015nearb,meyer1995packet}, some of which even showed these $O(C + D)$-length schedules are efficiently computable \cite{leighton1999fast}, even by \emph{distributed} algorithms \cite{ostrovsky1997universal, rabani1996distributed}.

\medskip

In contrast, essentially nothing beyond the above trivial $O(CD)$ and simple random delay bounds of $O(C+D\cdot \log n)$ and $O((C+D)\cdot \frac{\log n}{\log \log n})$ is known for \sm, despite ample practical and theoretical motivation.  In particular, \sm forms an important component of practical content-delivery systems \cite{jannotti2000overcast,chu2001enabling}, as well as recent theoretical advances in distributed computing \cite{dory2019improved,ghaffari2018new,ghaffari2016distributedI,ghaffari2016distributedII,haeupler2016low,haeupler2016near,haeupler2018minor,haeupler2018round}.

\medskip
The length of the optimal \sm schedule is made all the more intriguing by a recent, award-winning work of \citet{ghaffari2015nearb}. This work studied a natural generalization of \sm, namely how to schedule many simultaneous distributed algorithms, which corresponds to scheduling the routing of messages on directed acyclic graphs (DAGs). Ghaffari showed that in this setting no $O(C+D)$ schedules exist and, in fact, (up to $O(\log \log n)$ factors) the random delay upper bound of $O(C+D\cdot \log n)$ is the closest that one can get to an $O(C+D)$ bound. Given that multicasts are more general than unicasts but less general than DAGs, it has remained an interesting open question whether an $O(C + D)$ schedule comparable to those for unicasts is also possible for multicasts or whether, like for DAGs, a multiplicative $O(\log n)$ overhead is required. 

\subsection{Our Contributions}

We show that, unlike in the unicast setting where $O(C + D)$ schedules are possible, for multicasts the trivial $O(CD)$ upper bound cannot be improved without introducing a dependence on the number of nodes, $n$.

\begin{restatable}{thm}{lowerBound}\label{lem:LB}
For any $C, D, n \in \mathbb{Z}_+$ such that $C 2^{D + 1} \leq \log n$ there exists a simultaneous multicast instance on an $n$-node graph with congestion $C$ and dilation $D$ whose optimal schedule requires at least $\tfrac{CD}{2}$ rounds.
\end{restatable}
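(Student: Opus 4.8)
The plan is to construct, for each admissible $(C,D)$, a recursively-defined hard instance $\mathcal G_D$ and to lower-bound the length of \emph{every} schedule for it by a potential argument that accrues $\Omega(C)$ extra rounds per level of the recursion. Note first that the statement is only non-trivial for $C,D\ge 3$: otherwise $\max(C,D)\ge CD/2$, and $\max(C,D)$ is always a lower bound on the schedule length (an edge of congestion $C$ needs $C$ rounds; a tree of depth $D$ needs $D$). So I may assume $C$ and $D$ are large.

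The mechanism to exploit is this. Consider an edge $e=(v,u)$ lying in all $C$ trees. A node may broadcast one message to arbitrarily many children in a single round, so fan-out is free; but $v$ can push the $C$ distinct messages it owes \emph{across $e$} only one per round, so $u$ learns the $C$ messages in an order essentially dictated by the order in which $v$ learned them, shifted by one. If $u$'s own subtree ``wants'' those messages in a very different order, the reshuffling at $u$ costs up to $\approx C$ more rounds before $u$'s children are served. The instance chains $D$ such congestion-$C$ bottleneck edges nested along a root-to-leaf path and is engineered so that these reshuffling penalties are forced to \emph{add}, rather than overlap/pipeline, across the $D$ levels.

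Concretely, $\mathcal G_i$ has a single entry node at which the $C$ messages arrive from above at times forming a profile $\mathbf a=(a_1,\dots,a_C)$; below it sit a very large number of children, and hanging off each child is a copy of $\mathcal G_{i-1}$, with the copies indexed by all ``local configurations'' of the schedule (the precise definition, and hence the number of children, is deferred to the size analysis below). I would track a potential $\Phi_i(\mathbf a)$ depending only on the sorted profile --- morally ``how late and how spread-out'' the arrivals are --- and prove by induction the invariant: every schedule finishes $\mathcal G_i$ no earlier than round $\Phi_i(\mathbf a)$, where $\Phi_i(\mathbf 0)\ge \tfrac{iC}{2}$. The inductive step combines (i) a release-time scheduling bound: having $v$ forward $C$ messages over a single edge to a child $u$ yields, at $u$, a profile that is at least ``$\mathbf a$ shifted by one with its ties pulled apart,'' hence strictly later/more-spread in the sense $\Phi$ measures; and (ii) the richness of the children: whatever order/continuation the schedule commits to, some child below $v$ carries exactly the copy of $\mathcal G_{i-1}$ that is worst for it, so the maximum over children of $\Phi_{i-1}$ applied to the delivered profile is at least $\Phi_{i-1}(\mathbf a)+\tfrac{C}{2}$. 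Unrolling to $i=D$ gives a lower bound of $\tfrac{CD}{2}$.

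I expect the real work, and the reason the hypothesis $C2^{D+1}\le\log n$ is what it is, to be in making step (ii) hold \emph{uniformly over all schedules and all arrival profiles}. A naive ``one child per permutation of $[C]$'' gives only $C!$-fold branching and, worse, against a spread-out arrival profile a schedule can pipeline the reshuffling away so that the per-level gain collapses to $O(1)$. To prevent this one must pre-install below each level-$i$ entry node a separate sub-gadget for \emph{every} way the remaining $D-i$ levels of scheduling can behave; encoding such a behavior as a map from the $\le 2^{D+1-i}$ ``decision leaves'' of the remaining recursion into $\{0,1\}^C$ gives $\le 2^{C2^{D+1-i}}$ children at level $i$. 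Then $\log(\#\text{nodes})\le \sum_{i=1}^D C2^{D+1-i}<C2^{D+1}\le\log n$, so the instance fits on $n$ nodes; congestion is exactly $C$ because every bottleneck edge lies in all $C$ trees while all other edges lie in at most $C$ of them, and dilation is $D$ by construction. What remains --- and is the technical heart --- is to choose $\Phi_i$ so that both parts of the inductive step go through simultaneously, and then to check the congestion/dilation/size bookkeeping.
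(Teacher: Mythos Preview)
Your proposal has a genuine gap: with only $C$ trees, all sharing every bottleneck edge, the instance admits an $O(C+D)$ schedule by straightforward pipelining, so no $\Omega(CD)$ bound can hold for it. Concretely, once the arrival profile at a node $v$ is $(a_1,\dots,a_C)$ with $a_1<\dots<a_C$, node $v$ can forward the messages across \emph{each} outgoing edge at times $a_1+1,\dots,a_C+1$; every child $u$ thus receives the profile $\mathbf a+1$, and after the first level (where ties are pulled apart) each subsequent level adds only $+1$. Your step~(ii) asserts that ``some child carries the copy of $\mathcal G_{i-1}$ that is worst for it'' and therefore the potential jumps by $C/2$, but since all children receive the \emph{same} profile and your $\Phi_{i-1}$ depends only on that profile, the maximum over children is just $\Phi_{i-1}(\mathbf a+1)$. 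No amount of ``pre-installing sub-gadgets for every continuation'' helps here, because the arrival profile the sub-gadget sees is independent of which sub-gadget it is.

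The paper's construction avoids this by using not $C$ but $C\cdot 2^{D-1}$ trees. At the top level these are partitioned into $2^{D-1}$ groups of $C$, with each group sharing one edge; adjacent edges (carrying \emph{disjoint} groups of $C$ trees) meet at a common node $v$. Below $v$, the construction branches over all ways of choosing $C/2$ trees from each of the two adjacent groups and \emph{merging} them into a new group of $C$; each such merged group forms the congestion set on an edge one level down, and the recursion continues on the resulting $2^{D-2}$ groups. The lower bound is then a clean induction: in the first $C/2$ rounds, at least $C/2$ trees on each top edge have not yet crossed it; the union of these slow halves over each adjacent pair is one of the pre-installed merged groups, so some recursive sub-instance of depth $D-1$ has received \emph{none} of its messages after $C/2$ rounds, and by induction needs $C(D-1)/2$ more. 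The crucial difference from your attempt is that the trees congesting each edge \emph{change} from level to level, which is exactly what defeats pipelining; the branching is over which $C/2$-subsets are slow, not over schedule continuations, and this is what produces the $\binom{C}{C/2}^{2^{D-1}}$-type blowup that matches the $C\,2^{D+1}\le\log n$ hypothesis.
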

\noindent We note that our lower bound also implies a new lower bound of $\Omega(CD)$ for the DAGs case studied by \citet{ghaffari2015nearb} since the DAGs case generalizes \sms.

On the positive side, we show that if one allows a schedule's length to depend on $n$ then, unlike in the DAGs case where $O(C + D \cdot \log n)$ is the closest one can get to $O(C + D)$, one can get $O(C + D)$ with a mere \emph{additive} $O(\log ^ 2 n)$.

\begin{restatable}{thm}{upperBoundCentralized}
	\label{thm:UBCentralized} 
Each \sm instance with congestion $C$ and dilation $D$ in an $n$-node network admits a schedule of length at most $O(C + D + \log^2 n)$.
\end{restatable}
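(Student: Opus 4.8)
The plan is to lift the random-delays-plus-Lov\'asz-Local-Lemma (LLL) paradigm of Leighton--Maggs--Rao from paths to trees. The skeleton one wants to imitate is: give each message an independent random start delay from a window of size $\Theta(C)$; in the resulting \emph{idealized} schedule each message simply walks down its tree, so the number of messages wanting to cross a fixed edge at a fixed step is a sum of independent indicators of mean $O(1)$; apply the LLL to the events ``edge $e$ is overloaded during some length-$w$ time window'' to conclude that with positive probability every edge has relative congestion $O(1)$ on all windows; and finally turn such a low-relative-congestion idealized schedule into a genuine one at constant-factor cost. The obstruction to copying this verbatim is that a multicast tree $T_i$ may touch $\Theta(n)$ edges of $G$, so its single random delay influences the load on all of them: the bad events then form a dependency graph of degree $\mathrm{poly}(n)$ rather than $\mathrm{poly}(C,D)$, and the LLL only tolerates bad-event probability $2^{-\Theta(\log n)}$, forcing windows of size $w=\Theta(\log n)$.

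To keep this from being ruinous I would impose a low-depth tree decomposition on each $T_i$: a heavy-path (centroid-path) decomposition, which cuts $T_i$ into $O(\log n)$ \emph{levels}, each a collection of vertex-disjoint downward paths, so that every root-to-leaf path of $T_i$ meets only $O(\log n)$ of these paths and the portions it uses total at most $D$. This buys two things. First, for each level $\ell$ the paths of all trees form an ordinary simultaneous \emph{unicast} instance of congestion $\le C$ and dilation $\le D$ (using that a tree's level-$\ell$ paths are pairwise edge-disjoint), so the unicast machinery is a legitimate starting point level by level. Second --- and this is what controls the length --- if every heavy path carries its own extra random delay of size $w=\Theta(\log n)$ on top of the $\Theta(C)$ root delay, then a root-to-leaf route accumulates only its $O(\log n)$ heavy-path delays, a total of $\Theta(C)+O(\log^2 n)$; since the route itself has length $\le D$, every message is delivered within $O(C+D+\log^2 n)$ steps of an idealized schedule of relative congestion $O(1)$. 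Converting this to a genuine schedule should then cost only a constant factor, via an LMR-style recursive refinement; and, to honor the word ``centralized'', I would run the constructive (Moser--Tardos) LLL so the schedule is polynomial-time computable.

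The step I expect to be the real obstacle is making the LLL (and its recursive realization) actually go through once the decomposition is in place. One must (i) track, for a fixed edge $e$, exactly which random delays affect the load on $e$ --- the delays of the $O(\log n)$ ancestor heavy paths of $e$'s heavy path inside each of the $\le C$ trees through $e$ --- and bound the resulting dependency degree, which is $\mathrm{poly}(C,\log n)$ times the timespan since two bad events interact only when their edges lie below a common heavy path of a common tree, and at most $n$ edges lie below any heavy path; (ii) verify the matching Chernoff estimate, using that $m_i$'s arrival time at $e$ is, conditioned on all but one ancestor-path delay, spread roughly uniformly over a window, so that overload in a length-$w$ window has probability $2^{-\Omega(w)}$ with $w=\Theta(\log n)$ sufficing (note $\log n\cdot\log C=O(C+\log^2 n)$, so the accumulated delay stays within budget for all $C$); and (iii) handle the fact that the levels must be \emph{pipelined} --- a level-$\ell$ heavy path may begin forwarding as soon as its message reaches its attachment point on its parent path, rather than after all of level $\ell-1$ is finished --- and check that this pipelining, together with the recursive window-refinement needed to bring relative congestion all the way down so the genuine schedule loses only $O(1)$, still yields total length $O(C+D+\log^2 n)$ rather than an $O((C+D)\log n)$ or $O(C\log n)$ blowup. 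The interplay of tree structure, accumulated delays, and the recursive LMR-style realization is where I expect the bulk of the technical effort to lie.
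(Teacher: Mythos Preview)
You have the right structural ingredient --- the heavy-path decomposition --- but you are working much harder than necessary, and the hard step you flag (pipelining plus an LMR-style recursive realization) is exactly the step the paper's argument avoids altogether.

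The paper's proof is considerably simpler than your outline. It refines the heavy-path decomposition one step further: each heavy path is chopped into \emph{short paths} of length at most $\log n$. Then every root-to-leaf path in $T_i$ meets at most $D/\log n + \log n$ short paths, and one can think of time as divided into \emph{frames} of $\Theta(\log n)$ rounds, with the level-$j$ short paths of $T_i$ scheduled in frame $j$. Now there is a \emph{single} random delay $X_{T_i}$ per tree, uniform in $[C/\log n]$ (a delay measured in frames, not rounds). For any fixed edge $e$ and fixed frame, the number of trees whose short path through $e$ lands in that frame has expectation $O(\log n)$, and a plain Chernoff bound plus a union bound over the $\mathrm{poly}(n)$ edge--frame pairs shows every frame is a simultaneous \emph{unicast} instance with congestion $C'=O(\log n)$ and dilation $D'=O(\log n)$. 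One then invokes the Leighton--Maggs--Rao $O(C'+D')$ schedule as a black box on each frame and concatenates, for total length $\bigl(\tfrac{C+D}{\log n}+\log n\bigr)\cdot O(\log n)=O(C+D+\log^2 n)$.

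Compared with your plan: no LLL is needed (union bound suffices, since there are only $\mathrm{poly}(n)$ bad events and each fails with probability $n^{-\Omega(1)}$); no per-heavy-path delays are needed (a single per-tree delay already makes each frame's congestion concentrate); and there is no pipelining or recursive refinement to analyze, because LMR is used as an off-the-shelf subroutine on each frame separately. Your per-heavy-path delays do not buy anything the single per-tree delay does not already give, and your dependency-degree calculation still lands at $\mathrm{poly}(n)$ (as you note, up to $n$ edges sit below a heavy path), so the LLL offers no advantage over the union bound here. The place where your approach is genuinely incomplete is step~(iii): turning an idealized schedule with $O(1)$ relative congestion on $\Theta(\log n)$ windows into an actual schedule, with pipelined levels, without reproving LMR in a harder setting. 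The paper's short-path-plus-frames trick is precisely the device that eliminates that difficulty.
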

\noindent We also verify that these schedules are efficiently computable both by a deterministic, centralized polynomial-time algorithm and by a randomized distributed algorithm in the \CONGEST model. 

Complementing our proof that shows the existence of $O(C + D + \log^2 n)$ schedules, we extend our lower bound to show that any schedule with purely additive dependence on $C$, $D$ and any function of $n$ incurs at least an additive $\Omega(\log n)$ term. This implies that the additive $\log^2 n$ in \Cref{thm:UBCentralized} is essentially optimal.

\begin{restatable}{thm}{lowerBoundAdd}\label{lem:LBadd} 
Suppose there is a function $f$ such that for any \sm instance with congestion $C$ and dilation $D$, there is a schedule delivering all packets in $O(C + D) + f(n)$ steps. Then $f(n) = \Omega(\log n)$.
\end{restatable}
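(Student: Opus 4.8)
The plan is to leverage the hard instance already constructed (or rather, the family of instances promised by Theorem~\ref{lem:LB}) in a self-reducing/amplifying way. Recall that Theorem~\ref{lem:LB} gives, whenever $C2^{D+1}\le \log n$, an $n$-node instance with congestion $C$, dilation $D$, and optimal schedule length $\ge CD/2$. Suppose toward a contradiction that some function $f$ satisfies $f(n)=o(\log n)$ and every instance admits a schedule of length $O(C+D)+f(n)$; write the ``$O(C+D)$'' as $\alpha(C+D)$ for a fixed constant $\alpha$. I would pick a slowly growing dilation parameter $D=D(n)$ — say $D=\Theta(\log\log n)$ or even a large constant, tuned at the end — and the largest congestion $C=C(n)$ for which the constraint $C2^{D+1}\le\log n$ still holds, i.e. $C=\Theta(\log n / 2^{D})$. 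For this instance, Theorem~\ref{lem:LB} forces optimal length $\ge CD/2$, while the hypothesized schedule has length $\le \alpha(C+D)+f(n)$. Comparing, $CD/2 \le \alpha C + \alpha D + f(n)$, so $C(D/2-\alpha)\le \alpha D + f(n)$.

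The next step is to choose $D$ to be a fixed constant strictly larger than $2\alpha+2$, so that $D/2-\alpha\ge 1$. Then the inequality collapses to $C \le \alpha D + f(n) = O(1) + f(n)$. But with $D$ a constant, the congestion we are entitled to use is $C=\Theta(\log n / 2^{D})=\Theta(\log n)$, which grows without bound. Hence $f(n)\ge C - O(1) = \Omega(\log n)$, contradicting $f(n)=o(\log n)$ and establishing $f(n)=\Omega(\log n)$. The only subtlety is that $\alpha$ is not known to us a priori — but since the statement quantifies over a \emph{fixed} $f$ realizing the $O(C+D)$ bound, the constant $\alpha$ hidden in that $O(\cdot)$ is also fixed, so we are free to choose $D$ after seeing $\alpha$; this is legitimate because we only need the conclusion ``$f(n)=\Omega(\log n)$,'' an asymptotic statement in $n$ alone.

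I would present this cleanly as follows: fix the constant $\alpha$ from the assumed bound, set $D:=2\alpha+3$ (a constant), and for each large $n$ let $C:=\lfloor \log n / 2^{D+1}\rfloor = \Theta(\log n)$, so that the hypothesis $C2^{D+1}\le\log n$ of Theorem~\ref{lem:LB} is met. Apply Theorem~\ref{lem:LB} to get an $n$-node instance with optimal length $\ge CD/2$; apply the hypothesized schedule to get length $\le \alpha(C+D)+f(n)$; chain the two and simplify using $D-2\alpha\ge 3$ to obtain $f(n)\ge C/2 - \alpha D = \Omega(\log n)$.

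The main obstacle I anticipate is essentially bookkeeping rather than conceptual: making sure the floor in the choice of $C$ still yields $C=\Theta(\log n)$ (it does, since $2^{D+1}$ is a constant and we may restrict to $n$ large enough that $\log n \ge 2^{D+1}$), and making sure the ``$O(C+D)$'' in the theorem statement is genuinely a single fixed constant independent of the instance (which it is by the quantifier structure — the schedule-length guarantee must hold uniformly). There is no delicate probabilistic or combinatorial argument needed here; the entire content is already packed into Theorem~\ref{lem:LB}, and this theorem is just the observation that plugging in a constant dilation converts the multiplicative $CD/2$ lower bound into an additive $\Omega(\log n)$ obstruction.
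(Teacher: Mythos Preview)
Your proposal is correct and follows essentially the same approach as the paper: fix $D$ to be a constant depending on the hidden constant $\alpha$, take $C=\Theta(\log n)$ saturating the hypothesis of \Cref{lem:LB}, and compare the resulting $\Omega(CD)=\Omega(\log n)$ lower bound with the assumed $\alpha(C+D)+f(n)$ upper bound to force $f(n)=\Omega(\log n)$. The paper chooses $D=4\alpha$ rather than your $D=2\alpha+3$, but the argument is otherwise identical.
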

\noindent In summary, our results give an essentially optimal characterization of what \sm schedules are possible and cleanly separate the complexity of \sm schedules from those of simultaneous unicasts and DAGs.

\subsection{Related Work}\label{sec:related-word}
We will take this section to summarize additional related work.

\medskip 
\noindent\textbf{Existence of Good Simultaneous Unicast Schedules.}
The seminal work of \citet{leighton1994packet} initiated a series of works aimed at showing short \su schedules exist. For example, \cite{scheideler2006universal,peis2011universal} improved the constants in the $O(C + D)$ schedules of Leighton et al., with \cite{peis2011universal} also generalizing this result to edges with non-unit transit times and bandwidth. \citet{rothvoss2013simpler} presented a simplified proof compared to that of \cite{leighton1994packet} by way of the ``method of conditional expectations'', and also increased the constant in the $\Omega(C+D)$ lower bound.

%

\medskip
\noindent\textbf{Scheduling Other Simultaneous Algorithms.}
In addition to the mentioned work of \citet{ghaffari2015nearb}, there is a variety of work in scheduling of specific distributed algorithms. A classic result of \citet{topkis1985concurrent} shows that $h$-hop broadcast of $k$ messages from different sources can be done in $O(k + h)$ rounds. 
This is a special case of \sm, where $k$ multicast instances are to be scheduled along edges of trees with congestion $C\leq k$ and depth $D$. So, for this special case of \sm a $O(C+D)$-length schedule always exists.
More recently, \citet{holzer2012optimal} showed that $n$ BFSs can be performed from different nodes in $O(n)$ rounds. This was generalized by \citet{lenzen2013efficient} who showed that $k$ many $h$-hop BFSs from different sources can be done in $O(k + h)$ rounds. 


\medskip
\noindent\textbf{Algorithmic Results.}
Another line of work on \su and related problem focused on \emph{computing} optimal or near-optimal schedules efficiently, starting with work of \citet{leighton1999fast}.
There has been work on \su focused on ``local-control'' or distributed algorithms, where at each step each node makes decisions on which packets to move forward
along their paths, based only on the routing information that
the packets carry and on the local history of execution.
The $O(C+D\cdot \log n)$ algorithm of \citet{leighton1994packet}, for example, is such a distributed \su algorithm.
\citet{rabani1996distributed} improved this bound to $O(C) + D\cdot (\log^*n)^{O(\log ^* n)}$ rounds, which was then further improved by \citet{ostrovsky1997universal} to $O(C + D + \log^{1+\epsilon} n)$ rounds for any constant $\epsilon > 0$. 
Another series of works also studied centralized algorithms for simultaneous unicast where the source and sink pairs are fixed but the algorithm is free to choose what paths it uses to deliver packets from sources to sinks. Notably, \citet{srinivasan2001constant} gave a constant approximation for this problem. \citet{bertsimas1999asymptotically} then provided an asymptotically-optimal algorithm, outputting a schedule of length $OPT+(\sqrt{n\cdot OPT})$; i.e., $OPT(1+o(1))$ for sufficiently large $OPT$. Lastly, there has been work in computing schedules for single multicasts \cite{bar2000message,elkin2003sublogarithmic} and even \sms \cite{iglesias2015rumors,iglesias2018plane} in models fundamentally different from the store-and-forward model we study.

\section{Intuition and Overview of Techniques}\label{sec:intuition}
We now give an overview of and intuition for the techniques we use in our main results.

\subsection{$\Omega(CD)$ Lower Bound}\label{sec:lowerInt}
The goal of our lower $\Omega(CD)$ lower bound construction is to repeatedly ``accumulate'' delays by combining together already delayed multicast trees. Here, we build some intuition for this strategy . 

Consider a \sm instance consisting of two trees $S$ and $T$ using a single edge as in \Cref{fig:LBSmallest}. Since at most one message crosses this edge each round, we know that after one round at least one of our trees' messages will be delayed by $1$ round, i.e., will not have crossed the edge.
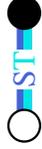
\begin{figure}
	\centering
	\begin{tikzpicture}[scale=1]
	\drawLBFigSmallest
	\end{tikzpicture}\caption{A congested edge example on multicast trees $S$ and $T$. Root of both trees given by black node. Each multicast tree given in a different color and edges labeled by which multicast trees use them.}\label{fig:LBSmallest}
\end{figure}
\noindent More generally, if $C$ trees all use a single edge $e$ then for any fixed schedule one of these trees will require at least $C$ rounds until its message crosses $e$. 

If we knew, a priori, for any $C$-congested edge which multicast tree was delayed by $C$, producing a hard multicast instance would be easy as we could repeatedly combine together the multicast trees delayed by $C$ in each congested edge. For instance, consider the following example, illustrated in \Cref{subfig:noGuess} where $C=2$.  We have four multicast trees $S$, $T$, $U$ and $V$ where $S$ and $T$ have root $r_1$ and $U$ and $V$ have root $r_2$. Both roots connect to a vertex $v$ where $(r_1, v)$ is used by $S$ and $T$ and $(r_2, v)$ is used by $U$ and $V$. If we knew that after a single round $T$ and $V$ used edges $(r_1, v)$ and $(r_2, v)$ respectively, then we could ``combine'' $S$ and $U$ into a new edge $(v, u)$. Then, the messages for $S$ and $U$ wouldn't arrive at $v$ until at least two rounds have passed and since both $S$ and $U$ use the edge $(v, u)$, one of the messages of either $S$ or $U$ wouldn't arrive at $u$ until four rounds have passed, despite the fact that $u$ is only two hops from the root of each tree. We might hope, then, to recursively repeat this strategy, combining together such gadgets to accumulate larger and larger delays. 

However, we, of course, do not always know which trees are delayed and so combining together the most delayed tree is not a feasible strategy. That is, we must provide a construction which requires many rounds for \emph{every possible} \sm schedule, not many rounds for one fixed schedule.


We overcome this challenge by using the fact that trees, unlike paths, branch. In particular, we will use the branching of trees to ``guess'' which tree was delayed for every congested edge. 
As a concrete example of this strategy consider the \sm instance given in \Cref{subfig:guess}. We have the instance as in \Cref{subfig:noGuess} but now instead of vertex $u$, we have four vertices, one for each possible guess of which pair of elements in $\{S, T\} \bigtimes \{U, V\}$ are delayed at $(r_1, v)$ and $(r_2, v)$. Now notice that for any fixed \sm schedule for this instance we know that after one round only one of $S$ and $T$'s messages will cross $(r_1, v)$ and only one of $U$ and $V$'s message will cross $(r_2, v)$. Without loss of generality suppose $S$ and $U$ do not cross $(r_1, v)$ and $(r_2, v)$ respectively in the first round. We then know that one of the edges $(v, u')$ corresponding to one of our guesses---in this case the edge used by $S$ and $U$---is such that the trees which use this edge will not deliver the their messages to $v$ until two rounds have passed. Similarly, we know that at most one of $S$ and $U$'s messages arrive at $u'$ by the third round---without loss of generality $U$'s message. Thus, $S$ will not successfully deliver its message to all leaves until at least four rounds have passed, despite the fact that all leaves of $S$ are only two hops from $S$'s root. More generally, if we repeated this construction with a larger congestion $C$ we would have that some multicast tree requires at least $2C$ rounds to deliver its message to all leaves, despite the fact that $C + D = C + 2$.
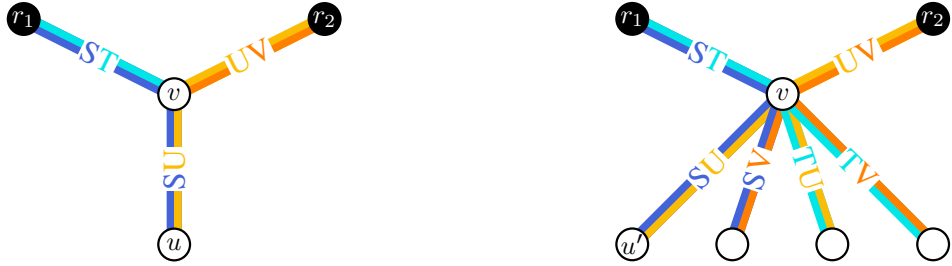
\begin{figure}
	\centering
	\begin{subfigure}[t]{0.49\textwidth}
		\centering
		\begin{tikzpicture}[scale=1]
		\drawLBFigSmallNoGuess
		\end{tikzpicture}
		\caption{Accumulating delays if we know $S$ and $U$ delayed.}\label{subfig:noGuess}
	\end{subfigure}%
	\begin{subfigure}[t]{0.49\textwidth}
		\centering
		\begin{tikzpicture}[scale=1]
		\drawLBFigSmallGuess
		\end{tikzpicture}
		\caption{Guessing the delayed trees.}\label{subfig:guess}
	\end{subfigure}\caption{Illustration of how one can ``guess'' which trees are delayed. Roots given by black nodes. Each multicast tree given in a different color and edges labeled by which multicast trees use them.}\label{fig:LBSmall}
\end{figure}

Our lower bound construction will recursively stack trees like those in \Cref{subfig:guess} to guess which multicast trees a schedule chooses to delay and accumulate a larger and larger delay by combining together these delayed trees. We will guarantee that some sub-graph is always correct in its guesses. We will also make use of the observation that if $C$ trees all use a single edge $e$ then by Markov's inequality at least $\tfrac{C}{2}$ of these trees will require $\frac{C}{2}$ rounds until their message crosses $e$ to reduce the amount of guessing we must do; this will allow us to expand the possible values of $C$ and $D$ we can use when constructing our lower bound graph which will aid in the proof of \Cref{lem:LBadd}.
We elaborate on our $\Omega(CD)$ construction in \Cref{sec:lowerBound} and then extend it in \Cref{sec:AddLogn} to show that additive $\Omega(\log n)$ is necessary for length $O(C + D)$ schedules.

\subsection{Existence of $O(C + D + \log^2 n)$-Length \SM Schedules}
The main intuition underlying our $O(C + D + \log^2 n)$-length \sm schedules is that every instance of multicast can be reduced to a series of unicast instances and, as \citet{leighton1994packet} showed, unicast instances admit schedules of length linear in their congestion and dilation. Our goal then is to gracefully reduce a \sm to a series of \sus.

Here, we discuss two natural approaches for such a reduction, argue that they fail and extract intuition for our upper bound from this failure. In the first approach, for each multicast tree $T_i$ we define $|L_i|$ unicast instances, where for each leaf $l \in L_i$ we have a unicast instance on the root-to-leaf path from $r_i$ to $l$. While this \su instance has dilation $D'=D$, it also has congestion potentially as high as $C'=\Omega(n)$: unicasts corresponding to the same tree are run independently, and each edge in $T_i$ is contained in every root-to-leaf unicast path. Relying on the existence of schedules of length $O(C'+D')$ guaranteed by \cite{leighton1994packet}, then, could yield schedules of length as bad as $\Omega(D + n)$. In the second approach, we define a separate unicast instance for each edge in each $T_i$. We then run a \su schedule for all edges from roots of multicast trees to their children, then from roots' children to their children, and so on and so forth. Here we have at least obtained a sequence of \su instances with lower dilation---$D'=1$---and congestion no larger than what we started with---$C'\leq C$. \citet{leighton1994packet} guarantees the existence of schedules of length $O(C'+D')$ for each such \su instance. Unfortunately, we must concatenate together the schedules of $D$ such \su instances to solve the \sm instance, which would yield schedules of length $\Omega(D(C'+D'))=\Omega(CD)$; i.e., no better than the trivial schedule.

Thus, the challenge in reducing \sm to \su is finding a suitable way of balancing between these two approaches. In the first reduction, we were able to solve a single \su problem with dilation $D$ but one whose congestion was much larger than the congestion of the \sm problem with which we started. In the second extreme, we were able to solve \su instances with dilation and congestion only $1$ and $C$ but we had to solve many such problems. 

Our goal, then, is to find a way of reducing \sm to \su in a way that keeps the dilation and congestion of the resulting \su instances small but does not require solving too many \sus. We strike such a balance by computing what we call a \emph{$(\log n,\log n)$-short path decomposition} of each multicast tree. This decomposition is based on subdividing paths in the heavy path decompositions of \citet{sleator1983data}. By using such a decomposition on each multicast tree along with random delays determining when to schedule each path in the decomposition, we obtain a sequence of $\frac{C}{\log n}+\frac{D}{\log n}+\log n$ many \su instance whose congestion $C'$ and dilation $D'$ are both at most $O(\log n)$ with high probability. Relying on the $O(C'+D')$ schedules guaranteed by \citet{leighton1994packet} for these \su instances, we find that every \sm instance admits a  schedules of length $O(C + D + \log^2 n)$. We elaborate on this in \Cref{sec:upper-bound}. We also provide centralized and distributed algorithms for the computation of these schedules in \Cref{sec:algorithms}.

\section{$\Omega(CD)$ Lower Bound}\label{sec:lowerBound}

This section is dedicated to the proof of our $\Omega(CD)$ lower bound. We begin this section by providing the family of instances we use to show this lower bound. We proceed to show how this family requires $\Omega(CD)$ rounds, showing that $O(C+D)$ \sm schedules are generally impossible and that the trivial $O(CD)$ schedule is the best \sm schedule without a dependence on $n$. Specifically, we prove the following.
\lowerBound*

\subsection{Multicast Instance}\label{sec:multiLBInstance}
We will describe how our instance is constructed in a top-down manner. For the remainder of this section we fix a desired congestion $C$ and dilation $D$. We will recursively construct a graph in which every edge receives $C$ ``labels'' where the graph induced by each label is a distinct multicast tree.\footnote{The graph induced by a label $\chi$ on graph $G$ is the subgraph of edges from $G$ labeled $\chi$.}  As each label corresponds to a multicast tree, each label will also have a root corresponding to it which will be the root of the corresponding multicast tree. Ultimately, our instance corresponding to a fixed $C$ and $D$ will contain $C \cdot 2^{D-1}$ multicast trees and so throughout this section we will imagine we have $C \cdot 2^{D-1}$ distinct labels. Throughout this section we will also let capital letters correspond to labels; e.g.\ $\{S,T,U,V,W,X,Y,Z\}$ is a set of $8$ labels. Before moving onto specific details, we refer the reader to \Cref{fig:LBStepThree} for a visual preview of our lower bound construction.


\subsubsection{Interleaving Labels}

In order to rigorously define what it means to guess which multicast trees are delayed, we introduce the idea of ``interleaving'' the sets of labels corresponding to our multicast trees.

Given sets $S_1$ and $S_2$, each consisting of $C$ labels, we let the interleaving of $S_1$ and $S_2$ be $I(S_1, S_2) := \{S'_1 \cup S_2' : S_i' \subseteq S_i, |S'_i| = C/2 \}$ be all subsets which take $C/2$ labels from $S_1$ and $C/2$ labels from $S_2$. For example, if $C=2$ and $S_1 = \{S, T\}$ and $S_2 = \{U, V\}$ then $I(S_1, S_2) = \{\{S,U\}, \{S,V\}, \{T, U\} ,\{T,V\} \}$. $S_1$ and $S_2$ will correspond to two adjacent edges, each in a disjoint set of $C$ multicast trees each and so $I(S_1, S_2)$ will correspond to all ways of guessing which $C$ trees, taking $C/2$ trees from one edge and $C/2$ trees from the other edge, are delayed among the $2C$ multicast trees which use one of the two edges.

Let $\mcS = (S_i)_{i=1}^{2^{D-1}}$ be a tuple partitioning our $C \cdot 2^{D-1}$ distinct labels into sets of size $C$. That is, each $S_i$ is a set (with associated index $i$) containing $C$ distinct labels and $S_i \cap S_j = \emptyset$ for $i \neq j$. We call two sets in $\mcS$ adjacent if the index of one is $2i-1$ and the index of the other is $2i$ for some $i \in \mathbb{Z}_{\geq 1}$. Finally, we let
\begin{align*}
I(\mcS) := \bigtimes_{i=1}^{|\mcS|/2} I(S_{2i}, S_{2i + 1})
\end{align*}
be all possible interleavings of adjacent sets in $\mcS$ where $\bigtimes$ denotes an $|S|/2$-wise Cartesian product. This tuple $\mcS$ will correspond to all edges of our construction at height $D$ while a pair of adjacent sets $S_{2i-1}$ and $S_{2i}$ will correspond to two adjacent edges, each in disjoint sets of $C$ multicast trees; thus $\mcI(\mcS)$ will correspond to all possible ways to guess how, among all pairs of adjacent edges at height $D$, which $C$ trees, taking $C/2$ trees from one edge and $C/2$ trees from the other edge, are delayed in each pair.

We give a concrete example of our notation where $C = 2$ and $D = 3$. Let $\mcS = (S_1, S_2, S_3, S_4) = (\{S, T\}, \{U, V\}, \{W, X\}, \{Y, Z\})$.  Then $I(\mcS)$ corresponds to all ways of combining $S_1$ and $S_2$ by taking one element from each and all ways of combining $S_3$ and $S_4$ by taking one element from each. In particular, we have 
\begin{align*}
I(\mcS) &= I(S_1, S_2) \times I(S_3, S_4)\\
&= \big\{\{S,U\}, \{S, V\}, \{T,U\}, \{T, V\} \big\} \times \big\{\{W,Y\}, \{W, Z\}, \{X,Y\}, \{X, Z\} \big \},
\end{align*}
which is
\begin{align*}
&= \big\{ (\{S,U\}, \{W,Y\}) , &(\{S,U\}, \{W,Z\}), && (\{S,U\}, \{X,Y\}), && (\{S,U\}, \{X,Z\}),\\
&~~~~~~  (\{S,V\}, \{W,Y\}) , & (\{S,V\}, \{W,Z\}), && (\{S,V\}, \{X,Y\}), && (\{S,V\}, \{X,Z\}),\\
&~~~~~~  (\{T,U\}, \{W,Y\}) , & (\{T,U\}, \{W,Z\}), && (\{T,U\}, \{X,Y\}), && (\{T,U\}, \{X,Z\}),\\
&~~~~~~  (\{T,V\}, \{W,Y\}) , & (\{T,V\}, \{W,Z\}), && (\{T,V\}, \{X,Y\}), && (\{T,V\}, \{X,Z\}) \big\}.
\end{align*}
Notice that each $\mcS' \in I(\mcS)$ is a tuple of sets, each of $C$ labels, and the number of distinct labels across all sets in $\mcS'$ is exactly half of the number of labels across all sets in $\mcS$. Each $\mcS'\in I(\mcS)$ will correspond to a single recursive call in our construction.

\subsubsection{Our Instance}

With the above notation in hand, we now describe how we recursively construct our multicast instance for a fixed $C$ and $D$. Let $\mcS = (S_i)_{i=1}^{2^{D-1}}$ be an arbitrary partition of our $C \cdot 2^{D-1}$ distinct labels into sets of size $C$ as above. Our recursion will be on $D$; that is, we will recursively construct several instances of \sm with dilation $D-1$ and congestion $C$ and then combine together these instances into a single instance with congestion $C$ and dilation $D$ (in fact, every edge will have congestion exactly $C$ and every tree will have depth exactly $D$). We let $M_\mcS$ be the \sm instance we construct from $\mcS$ and let $G_\mcS$ be its corresponding graph. As can easily be seen by induction on our recursive depth, each root will inductively only be incident on a single edge and so we let $\chi(r)$ be the set of labels of the one edge incident to root $r$.

\begin{itemize}
	\item \textbf{Base case ($D=1$):}  In this case we have $\mcS = (S_1)$. We let $M_\mcS$ consist of one edge $(r, v)$ which receives every label in $S_1$ and let $r$ be the root of every label/tree.
	\item \textbf{Inductive case ($D > 1$):} We construct $M_\mcS$ in three steps. See Figures \ref{fig:LBStepOne}, \ref{fig:LBStepTwo} and \ref{fig:LBStepThree} for an illustration of the results of steps one, two and three respectively. In the example in these figures $C = 2$, $D=3$ and $\mcS = (S_1, S_2, S_3, S_4) = (\{S,T\}, \{U, V\}, \{W,X\}, \{Y,Z\})$; roots of multicast trees are indicated by black nodes and edges are colored according to their label/tree.
	\begin{enumerate}
		\item First, for each pair of adjacent sets $S_{2i-1}$ and $S_{2i}$ in $\mcS$ we introduce vertices $r_{2i-1}$, $r_{2i}$ and $v_i$ and edges $e_{2i-1} = (r_{2i-1}, v_i)$ and $e_{2i} = (r_{2i}, v_i)$. We let $r_j$ be the root of all trees with a label in $S_j$ and $e_j$ receive all labels in $S_j$ for every $j$ ( \Cref{fig:LBStepOne}).
		\item Next, we ``guess'' which trees will be delayed on the $e_j$s. In particular, we add to our graph the disjoint union of $G_{\mcS'}$ for each $\mcS' \in \mcI(\mcS)$; each of the new connected components in our graph at this point corresponds to some instance $M_{\mcS'}$; each edge inherits the $C$ labels it received in $M_{\mcS'}$ (\Cref{fig:LBStepTwo}).
		\item Finally, we connect up our guesses to the corresponding parents. In particular, for each vertex $r$ that was a root in $M_\mcS'$ if $\chi(r) \in I(S_{2i-1}$, $S_{2i})$, we identify $r$ and $v_i$ as the same vertex (\Cref{fig:LBStepThree}).
	\end{enumerate}

\end{itemize}

It is easy to verify by induction on our recursive depth that, indeed, each label and its root induce a tree in the returned graph and so $M_\mcS$ is an instance of \sm; see \Cref{app:lowerBound} for a proof.

\begin{restatable}{lem}{lowerBoundIsMulticast}
	\label{lem:isMulticastInstance} 
	Let $\mcS = (S_i)_i^{2^{D-1}}$ be a partition of $C \cdot 2^{D-1}$ distinct labels as above. Then each label in $\bigcup_i S_i$ induces a rooted tree in $G_\mcS$.
\end{restatable}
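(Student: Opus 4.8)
The plan is to prove \Cref{lem:isMulticastInstance} by induction on the dilation parameter $D$, mirroring exactly the recursive structure of the construction of $M_\mcS$. The statement to maintain through the induction is slightly stronger than what is written, so that the inductive step goes through cleanly: for any partition $\mcS = (S_i)_{i=1}^{2^{D-1}}$ of $C\cdot 2^{D-1}$ labels into blocks of size $C$, every label $\chi\in\bigcup_i S_i$ induces a tree in $G_\mcS$ \emph{rooted at the unique root vertex $r$ with $\chi\in\chi(r)$}, and moreover this root vertex has degree exactly $1$ in the $\chi$-induced subgraph (indeed it is incident to exactly one edge of $G_\mcS$ at all, namely the edge carrying the block containing $\chi$). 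Carrying this degree-one claim along is what lets step 3 of the construction be applied safely.

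For the base case $D=1$, we have $\mcS=(S_1)$ and $G_\mcS$ is the single edge $(r,v)$ carrying all labels of $S_1$, rooted at $r$; each label induces the single edge $(r,v)$, which is trivially a tree, and $r$ has degree $1$. For the inductive step with $D>1$, fix a label $\chi$. It lies in exactly one block of $\mcS$, say $S_j$ with $j\in\{2i-1,2i\}$. By the definition of the interleaving $I(S_{2i-1},S_{2i})$, the label $\chi$ appears in exactly half of the sets in $I(S_{2i-1},S_{2i})$ — those $C/2$-from-each-side subsets that happen to include $\chi$ — and hence, since $\mcI(\mcS)$ is the Cartesian product over all $i$, the label $\chi$ appears in the block-list of exactly those $\mcS'\in\mcI(\mcS)$ whose $i$-th coordinate contains $\chi$. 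In each such $\mcS'$, by the inductive hypothesis $\chi$ induces a tree $T_{\chi}^{\mcS'}$ in $G_{\mcS'}$ rooted at a degree-one vertex $r^{\mcS'}$ with $\chi\in\chi(r^{\mcS'})$, and $\chi(r^{\mcS'})\in I(S_{2i-1},S_{2i})$ is exactly the $i$-th coordinate of $\mcS'$. In step 3, this root $r^{\mcS'}$ is identified with $v_i$. So after the identifications, the $\chi$-labeled edges of $G_\mcS$ are: the single edge $e_j=(r_j,v_i)$ from step 1, together with, for each $\mcS'\in\mcI(\mcS)$ whose $i$-th coordinate contains $\chi$, the edges of $T_\chi^{\mcS'}$ glued onto $v_i$ at the old root $r^{\mcS'}$.

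The remaining work is to check this union is a tree rooted at $r_j$. Connectivity is immediate: every $T_\chi^{\mcS'}$ meets $v_i$, and $v_i$ is joined to $r_j$ by $e_j$. Acyclicity follows from two observations: (a) each $T_\chi^{\mcS'}$ is a tree by the inductive hypothesis and, because the components $G_{\mcS'}$ are added as a disjoint union in step 2, these trees share no vertices or edges among themselves \emph{except} that their old roots are all identified with the single vertex $v_i$ — so gluing them at that one common vertex cannot create a cycle (each contributes at its former degree-one root, so the glued object is still a tree, a "bouquet" of trees at $v_i$); and (b) attaching the pendant edge $e_j$ at $v_i$ to the new leaf vertex $r_j$ again preserves acyclicity. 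Finally $r_j$ has degree $1$ (it is incident only to $e_j$, since $r_j$ is a freshly introduced vertex in step 1 and is never identified with anything in step 3 — only the $r^{\mcS'}$'s are identified, and those are distinct from $r_j$), which re-establishes the strengthened hypothesis. I expect the main obstacle to be purely bookkeeping: being careful that in step 3 the vertex identifications pair up the old roots $r^{\mcS'}$ with the correct $v_i$ (the one indexed by the coordinate $i$ for which $\chi(r^{\mcS'})\in I(S_{2i-1},S_{2i})$), that distinct $\mcS'$ do not accidentally get glued to each other except through the shared $v_i$'s, and that no two of the $v_i$ vertices coincide — all of which follow from the disjointness of the blocks $S_i$ and the disjoint-union in step 2, but must be spelled out to make the tree/acyclicity argument airtight.
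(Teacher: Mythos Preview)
Your proposal is correct and follows essentially the same approach as the paper: induction on $D$ with a strengthened invariant about roots that guarantees the vertex identifications in step~3 do not create cycles. The paper's strengthened invariant is phrased as property $\ostar$ (``$r_\chi \in T_{\chi'}$ only if $r_\chi = r_{\chi'}$''), whereas you carry the structurally equivalent fact that each root has degree exactly one in $G_\mcS$; either form suffices to conclude that $T_\chi^{\mcS'}$ contains no root of $G_{\mcS'}$ other than its own, so that after step~3 the subtrees $T_\chi^{\mcS'}$ meet only at $v_i$.
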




\begin{figure}
	\centering
	\begin{tikzpicture}[scale=1]
	\drawLBConstructStepOne
	\end{tikzpicture}\caption{The result of step one of our lower bound construction. Notice that we have an edge for each set $S_i$ and each pair of adjacent $S_i$ are joined together at a vertex. We outline in red and green $v_1$ and $v_2$ respectively. Left-to-right black nodes are $r_1$, $r_2$, $r_3$ and $r_4$.}\label{fig:LBStepOne}
\end{figure}

\begin{figure}
	\centering
	\begin{tikzpicture}[scale=1]
	\drawLBConstructStepTwo
	\end{tikzpicture}\caption{The result of step two of our construction. Notice that we now have a new connected component for each $\mcS' \in I(\mcS)$, each of which corresponds to a guess for which trees will be delayed at $v_1$ and $v_2$. We outline in red and green the vertices which in step three we identify with $v_1$ and $v_2$ respectively.}\label{fig:LBStepTwo}
\end{figure}

\begin{figure}
	\centering
	\begin{tikzpicture}[scale=1]
	\drawLBConstructStepThree
	\end{tikzpicture}\caption{Our construction (i.e.\ the result of step three) for $\mcS=(S_1, S_2, S_3, S_4) = (\{S,T\}, \{U, V\}, \{W,X\}, \{Y,Z\})$, $C=2$ and $D=3$. Notice that we have modified the graph in \Cref{fig:LBStepTwo} by adding one edge for each $S_{2i-1}$ (resp. $S_{2i}$) colored by the labels in $S_{2i-1}$  (resp. $S_{2i}$) going from root $r_{2i-1}$  (resp. $r_{2i}$) to $v_i$.}\label{fig:LBStepThree}
\end{figure}
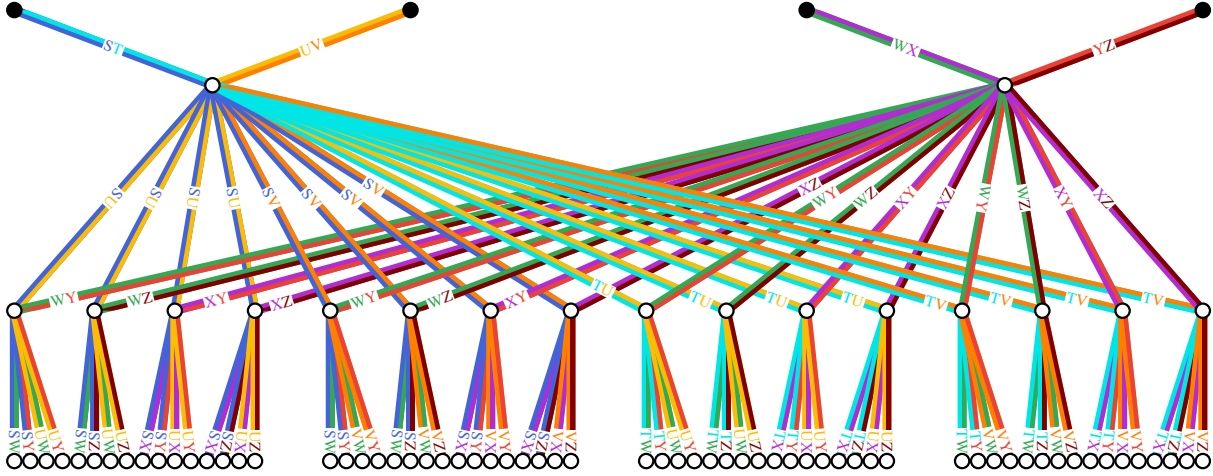

\subsection{Proof of $\Omega(CD)$ Lower Bound}
An induction on $D$ demonstrates that our \sm instance has the appropriate congestion and dilation. We defer a proof of this simple lemma to \Cref{app:lowerBound}.

\begin{restatable}{lem}{correctCAndD}
	\label{lem:conAndDil} 
	$M_\mcS$ has congestion $C$ and dilation $D$.
\end{restatable}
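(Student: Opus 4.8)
The plan is a straightforward induction on $D$, mirroring the recursive definition of $M_{\mathcal{S}}$. I need two claims: (i) every edge of $G_{\mathcal{S}}$ carries exactly $C$ labels, and (ii) every label $\chi \in \bigcup_i S_i$ induces a tree of depth exactly $D$ rooted at $r_\chi$ (the depth claim uses \Cref{lem:isMulticastInstance}, which already guarantees the "tree" part). Together these give congestion $C$ and dilation $D$.

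For the congestion statement, the base case $D=1$ is immediate: the single edge $(r,v)$ receives exactly the $C$ labels of $S_1$. For the inductive step, an edge of $G_{\mathcal{S}}$ is either one of the newly added edges $e_j = (r_j, v_{\lceil j/2\rceil})$ from step 1, which by construction receives exactly the $C$ labels of $S_j$; or it is an edge inherited from some $G_{\mathcal{S}'}$ with $\mathcal{S}' \in I(\mathcal{S})$, which by the induction hypothesis carries exactly $C$ labels (each $\mathcal{S}'$ is again a partition of $C \cdot 2^{D-2}$ labels into sets of size $C$, so the hypothesis applies). The vertex identifications in step 3 do not change the label set of any edge, so every edge still carries exactly $C$ labels. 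Hence congestion is exactly $C$.

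For the dilation statement, again $D=1$ is immediate. For $D>1$, fix a label $\chi$. Suppose $\chi \in S_j$ with, say, $2i-1 \in \{j, \text{its partner}\}$, so $\chi$ appears on $e_j$ with root $r_j$. The tree for $\chi$ consists of the edge $e_j$ from $r_j$ to $v_i$, followed by the union over all $\mathcal{S}' \in I(\mathcal{S})$ with $\chi$ appearing in some set of $\mathcal{S}'$ (equivalently, $\chi \in \chi(r)$ for the root $r$ of the corresponding copy that gets identified with $v_i$) of the subtree of $\chi$ inside $G_{\mathcal{S}'}$. By the induction hypothesis each such subtree has depth exactly $D-1$ measured from its root $r = v_i$; prepending the single edge $e_j$ gives depth exactly $D$. (The key point making this clean is that all the copies $G_{\mathcal{S}'}$ hanging off $v_i$ are attached at their roots, so their depths simply add one, and they are vertex-disjoint below $v_i$ by step 2, so no shortcut is created.) Since this holds for every label, $\text{depth}(T_\chi) = D$ for all $\chi$, so dilation is exactly $D$.

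The only mild subtlety — and the one place to be careful — is bookkeeping which $\mathcal{S}' \in I(\mathcal{S})$ a given label $\chi$ participates in, and checking that step 3 identifies the relevant root with the correct $v_i$ (namely: $\chi(r) \in I(S_{2i-1}, S_{2i})$ forces identification with $v_i$, and $\chi \in \chi(r)$ forces $\chi \in S_{2i-1} \cup S_{2i}$, consistently with $\chi$ lying on $e_{2i-1}$ or $e_{2i}$). This is exactly the consistency already established in proving \Cref{lem:isMulticastInstance}, so I would simply invoke that lemma's structure rather than redo it. No step here is a genuine obstacle; the proof is routine induction, which is why it is deferred to the appendix.
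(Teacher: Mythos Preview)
Your proposal is correct and follows essentially the same approach as the paper's own proof: a direct observation that every edge receives exactly $C$ labels for congestion, and an induction on $D$ for dilation, noting that each tree in $M_{\mathcal{S}}$ is obtained by prepending a single root edge to trees of depth $D-1$ in the recursive copies $M_{\mathcal{S}'}$. The paper's version is terser (it handles congestion in one sentence without an explicit induction and omits the bookkeeping you flag), but the structure of the argument is identical.
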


Another simple induction on $D$ and standard approximations allows us to bound the number of nodes in our lower bound graph;  see \Cref{app:lowerBound} for a proof.
\begin{restatable}{lem}{nUpperBound}
	\label{lem:nBound} 
	$|V(G_\mcS)| \leq 2^{C(2^{D + 1})}$.
\end{restatable}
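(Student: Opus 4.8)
## Proof Proposal for Lemma~\ref{lem:nBound}

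The plan is to set up a recurrence for $N_D := |V(G_\mcS)|$ (which depends only on $C$ and $D$, not on the particular partition $\mcS$, by the symmetry of the construction) and then solve it with crude-but-sufficient estimates. First I would read off the base case: when $D = 1$, $G_\mcS$ is a single edge, so $N_1 = 2$. For the inductive step, I would count the vertices introduced by the three construction steps. Step one introduces, for each of the $2^{D-1}/2 = 2^{D-2}$ adjacent pairs, three new vertices $r_{2i-1}, r_{2i}, v_i$, contributing $3 \cdot 2^{D-2}$ vertices. Step two takes a disjoint union of copies of $G_{\mcS'}$, one for each $\mcS' \in I(\mcS)$. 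Step three only identifies existing vertices, so it does not increase the count (and in fact decreases it, which only helps). Hence
\begin{align*}
N_D \le 3\cdot 2^{D-2} + |I(\mcS)| \cdot N_{D-1}.
\end{align*}

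The next step is to bound $|I(\mcS)|$. Since $I(\mcS) = \bigtimes_{i=1}^{2^{D-2}} I(S_{2i-1}, S_{2i})$ and each factor $I(S_{2i-1}, S_{2i})$ has size $\binom{C}{C/2}^2 \le (2^C)^2 = 4^C = 2^{2C}$, we get $|I(\mcS)| \le 2^{2C \cdot 2^{D-2}} = 2^{C \cdot 2^{D-1}}$. Plugging this in,
\begin{align*}
N_D \le 3\cdot 2^{D-2} + 2^{C\cdot 2^{D-1}}\cdot N_{D-1}.
\end{align*}
Unrolling this recurrence, $N_D$ is dominated by the product $\prod_{j=2}^{D} 2^{C\cdot 2^{j-1}}$ times $N_1$, plus lower-order additive terms; the exponents form a geometric sum $\sum_{j=2}^{D} C\cdot 2^{j-1} \le C\cdot 2^D$, so the product is at most $2^{C\cdot 2^D}$. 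Absorbing the additive $3\cdot 2^{D-2}$ terms and the factor $N_1 = 2$ into a slightly larger exponent (e.g. using $2^{C\cdot 2^D}\cdot 2^{C\cdot 2^D} = 2^{C\cdot 2^{D+1}}$ to soak up all slack), I would conclude $N_D \le 2^{C(2^{D+1})}$, which is the claimed bound. A clean way to present this is a direct induction: assume $N_{D-1} \le 2^{C\cdot 2^D}$, then $N_D \le 3\cdot 2^{D-2} + 2^{C\cdot 2^{D-1}}\cdot 2^{C\cdot 2^D} \le 2^{C\cdot 2^{D-1}+C\cdot 2^D+1} \le 2^{C\cdot 2^{D+1}}$, where the last inequalities use $3\cdot 2^{D-2} \le 2^{C\cdot 2^{D-1}+C\cdot 2^D}$ and $C\cdot 2^{D-1}+C\cdot 2^D+1 \le C\cdot 2^{D+1}$ (valid since $C\cdot 2^{D-1} \ge 1$).

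I do not expect any real obstacle here; the lemma is deliberately loose (the exponent $C\cdot 2^{D+1}$ has plenty of room), so the only thing to be careful about is bookkeeping: making sure the count of new vertices in steps one and three is an upper bound (identification in step three can only decrease the vertex count, and one should note that roots of the recursive pieces get identified with the $v_i$'s, so there is even more overcounting than the bound uses), and making sure the geometric sum in the exponent is bounded correctly. The cleanest route is the one-line induction above rather than explicitly unrolling the recurrence.
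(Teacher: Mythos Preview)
Your proposal is correct and takes essentially the same approach as the paper: set up a recurrence in $D$, bound $|I(\mcS)|$ via $\binom{C}{C/2}\le 2^C$, and close the induction with a loose exponent. The only cosmetic difference is that the paper counts edges (proving $m_D\le 2^{C\cdot 2^D+D}$ and then using $D\le C\cdot 2^D$ at the end) while you count vertices directly; both routes are equally short.
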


Having established the basic properties of our instance, we now argue that (asymptotically) the best one can hope for on our instance is the trivial $O(CD)$-round schedule. As discussed in \Cref{sec:lowerInt}, we will prove this by arguing that for any fixed schedule, some sub-graph in $G$ was correct in ``guessing'' which multicast trees were slowed down. In particular, we will argue that for any fixed schedule there is some smaller instance of \sm which this schedule must solve as a sub-problem which takes at least $\frac{C(D-1)}{2}$ rounds but which the schedule does not start making progress towards solving until at least $\frac{C}{2}$ rounds have passed.
\begin{lem}\label{lem:optLB}
	The optimal schedule on $M_\mcS$ is of length at least $\frac{C D}{2}$.
\end{lem}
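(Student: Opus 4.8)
The plan is to prove by induction on the dilation $D$ the stronger statement that \emph{every} schedule for $M_\mcS$ has length at least $\tfrac{CD}{2}$, exploiting the recursive structure of $M_\mcS$. We may assume $C$ is even (the construction uses $C/2$; for odd $C$ replace it by $\lfloor C/2\rfloor$ throughout, changing nothing essential). The base case $D=1$ is immediate: $M_\mcS$ is a single edge $(r,v)$ carrying $C$ distinct messages, all rooted at $r$, so each must traverse this edge and only one can do so per round; hence the last message to cross it reaches $v$ no earlier than round $C\ge \tfrac{C}{2}$.

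For the inductive step, fix $D>1$ and an arbitrary schedule $\sigma$ for $M_\mcS$ of length $L$. First I would invoke the counting (``Markov'') observation hinted at in \Cref{sec:lowerInt}. For each pair of adjacent sets $S_{2i-1},S_{2i}$ in $\mcS$, the construction gives vertices $r_{2i-1},r_{2i},v_i$ with edges $e_{2i-1}=(r_{2i-1},v_i)$ and $e_{2i}=(r_{2i},v_i)$, where $e_{2i-1}$ is the only edge at the root $r_{2i-1}$ and carries exactly the $C$ labels of $S_{2i-1}$ (similarly $e_{2i}$). Thus each of the $C$ messages labelled by $S_{2i-1}$ must cross $e_{2i-1}$ as its very first move, and at most $C/2$ of them can do so within the first $C/2$ rounds. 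I therefore pick a set $S'_{2i-1}\subseteq S_{2i-1}$ of size exactly $C/2$ whose messages have \emph{not} crossed $e_{2i-1}$ by the end of round $C/2$, and likewise $S'_{2i}\subseteq S_{2i}$. The tuple $\mcS':=(S'_{2i-1}\cup S'_{2i})_i$ then lies in $I(\mcS)$, so by Step~2 of the construction $G_\mcS$ contains the subinstance $M_{\mcS'}$, which by \Cref{lem:conAndDil} (applied at dilation $D-1$) has congestion $C$ and dilation $D-1$, and whose root vertices are precisely the $v_i$'s (Step~3).

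Next I would argue that the restriction of $\sigma$ to the edges of $M_{\mcS'}$, shifted earlier in time by $C/2$ rounds, is a valid schedule $\sigma'$ for the instance $M_{\mcS'}$ of length at most $L-\tfrac{C}{2}$. The crucial structural point is that every message $m_\chi$ with $\chi\in\bigcup\mcS'$ can enter the component $M_{\mcS'}$ only through its root $v_i$ via the edge $e_\chi$: by \Cref{lem:isMulticastInstance} the tree $T_\chi$ is connected and rooted (outside $M_{\mcS'}$) at $r_\chi$, and $M_{\mcS'}$ is attached to the rest of $G_\mcS$ only at the $v_i$'s, since Step~3 identifies only roots of subinstances. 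By our choice of $\mcS'$, $m_\chi$ does not cross $e_\chi$ until after round $C/2$, so no edge of $M_{\mcS'}$ transports $m_\chi$ during rounds $1,\dots,C/2$ under $\sigma$; after the time shift, every send in $\sigma'$ thus occurs at a round $\ge 1$ and from a node that already holds the message (with $v_i$ acting as the source, holding $m_\chi$ from the outset). Since $\sigma$ delivers every $m_\chi$ to all leaves of $T_\chi$, in particular to those lying inside $M_{\mcS'}$ — which are genuine leaves of the embedded copy of $M_{\mcS'}$, as Step~3 never merges or subdivides leaves — the schedule $\sigma'$ completes $M_{\mcS'}$, and by round $L-\tfrac{C}{2}$. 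Applying the induction hypothesis to $\sigma'$ yields $L-\tfrac{C}{2}\ge\tfrac{C(D-1)}{2}$, i.e.\ $L\ge\tfrac{CD}{2}$, completing the induction.

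\textbf{Main obstacle.} I expect the delicate part to be the third step — rigorously justifying that ``restrict to $M_{\mcS'}$ and shift back by $C/2$'' produces a valid schedule for the instance $M_{\mcS'}$ of length $\le L-\tfrac C2$. This requires (i) confirming that $M_{\mcS'}$ interfaces with the rest of $G_\mcS$ only through its root vertices $v_i$, so that a message cannot be moved \emph{within} $M_{\mcS'}$ before it legitimately arrives at some $v_i$; (ii) checking the timing conventions for when a node ``holds'' a message, so that the shifted sends remain legal; and (iii) ensuring the leaves of the embedded $M_{\mcS'}$ are leaves of $G_\mcS$ so that $\sigma$'s completion of $M_\mcS$ really implies $\sigma'$'s completion of $M_{\mcS'}$. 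All three follow from the construction and \Cref{lem:isMulticastInstance}, but the bookkeeping must be done with care.
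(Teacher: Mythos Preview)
Your proposal is correct and takes essentially the same approach as the paper's proof: induction on $D$, identification of the ``slow'' half of the labels on each root edge $e_j$ during the first $C/2$ rounds, assembly of these into a tuple $\bar{\mcS}'\in I(\mcS)$, and then application of the inductive hypothesis to the embedded subinstance $M_{\bar{\mcS}'}$ to conclude $L\ge \tfrac{C}{2}+\tfrac{C(D-1)}{2}$. The paper is somewhat terser about the restriction/shift step---it simply asserts that the fixed schedule must first deliver a message to a root of $M_{\bar{\mcS}'}$ and then solve $M_{\bar{\mcS}'}$---whereas you spell out the structural points (i)--(iii) more carefully; but the underlying argument is identical.
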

\begin{proof}
	Fix an arbitrary \sm schedule. We will prove by induction on $D$ that $M_{\mcS}$ requires at least $\frac{CD}{2}$ rounds. The base case of $D=1$ is trivial, as in this case $M_{\mcS}$ is a single edge with congestion $C$ and so clearly requires at least $C\geq \frac{CD}{2}$ rounds.
	
	For the inductive step, $D>1$, suppose that for any partition $\mcS'$ of $C \cdot 2^{D-1}$ distinct labels into sets of size $C$, we have that $M_{\mcS'}$ requires at least $\frac{C(D-1)}{2}$ rounds. By definition of $M_{\mcS}$, any schedule which solves $M_{\mcS}$ can be projected in the natural way onto $M_{\mcS'}$ as a schedule which solves $M_{\mcS'}$ for any $\mcS' \in I(\mcS)$. For example, any schedule which solves the instance in \Cref{fig:LBStepThree} induces a schedule which when projected onto \Cref{fig:LBStepTwo} solves $M_{\mcS'}$ for each of the recursively constructed $M_{\mcS'}$. Even stronger, notice that $M_{\mcS}$ is created by combining the union of all $M_{\mcS'}$ for $\mcS' \in I(\mcS)$ in such a way that any schedule which solves $M_{\mcS}$ must also send all messages from roots of trees in $M_\mcS$ to corresponding roots of trees in $M_{\mcS'}$ and solve $M_{\mcS'}$. That is, let $r'$ be an arbitrary root for $M_{\mcS'}$ and let $\chi(r')$ be the labels associated with the one edge for root $r'$ in $M_{\mcS'}$. Then, if we identify $r'$ with $v_i$ when constructing $M_\mcS$ then a schedule for $M_{\mcS}$ must both send $r'$ the $C$ messages of $\chi(r')$ from $r_{2i-1}$ and $r_{2i}$ and solve $M_{\mcS'}$. Thus, clearly the time our schedule takes is at least the time it takes to send one message in $\chi(r')$ to $r'$ from $r_{2i-1}$ and $r_{2i}$ for some root in $M_{\mcS'}$ plus the time it takes to solve $M_{\mcS'}$. For example, if we let $\mcS' = (\{S,U\}, \{W,Y\})$ then any schedule which solves \Cref{fig:LBStepThree} solves $M_{\mcS'}$ but before doing so must clearly send at least one message of $\{S,U\}$ to $v_1$ or at least one message  $\{W,Y\}$ to $v_2$.
	
	We now define $\bar{\mcS}'$ where $\bar{\mcS}' \in I(\mcS)$ so that $M_{\bar{\mcS}'}$ is an instance of \sm embedded in $M_{\mcS}$ which our fixed schedule must solve in order to solve $M_{\mcS}$ but which it does not start start solving until at least $C/2$ rounds have passed. In particular, consider what our fixed schedule does in the first $C/2$ rounds. As $C$ messages must cross each edge $e_j$ and only one such message can cross per rounds, there are some $C/2$ multicast trees whose message cannot cross $e_j$ before $C/2$ rounds have passed; let $S'_j$ be these ``slow'' trees for edge $e_j$ and let 
	\begin{align*}
	\bar{\mcS}' := (S_{2i-1}' \cup S_{2i}')_{i=1}^{2^{D-2}}
	\end{align*}
	be a partition of the labels corresponding to these slow edges into sets of size $C$.
	
	The tuple $\bar{S}'$ belongs to $I(\mcS)$ and so, as discussed above, the fixed schedule must first send at least one message to a root in $M_{\bar{S}'}$ and then solve $M_{\bar{S}'}$. By definition of $\bar{S}'$, no messages arrive at roots of trees in $M_{\bar{S}'}$ until at least $\frac{C}{2}$ rounds have passed. On the other hand, by the inductive hypothesis, the latter sub-instance takes at least $\frac{C(D-1)}{2}$ additional rounds. Thus, the schedule must use at least $\frac{C(D-1)}{2} + \frac{C}{2} = \frac{CD}{2}$ rounds.
\end{proof}

Combining Lemmas \ref{lem:conAndDil}, \Cref{lem:nBound} and \ref{lem:optLB} and noting that we can always add dummy nodes to increase the number of vertices in our graph to a desired $n$ immediately yields \Cref{lem:LB}.

\section{Existence of $O(C + D + \log^2n)$-Length Schedules}\label{sec:upper-bound}

Here we demonstrate that length $O(C + D + \log^2n)$ \sm schedules always exist.
For this result we rely on heavy path decompositions, first introduced by \citet{sleator1983data}.

\begin{Def}[Heavy path decomposition \cite{sleator1983data}]
	A \emph{heavy path decomposition} of a rooted tree $T$ is obtained as follows. First, each non-leaf node selects one \emph{heavy} edge, which is an edge to a child with the greatest number of descendants (breaking ties arbitrarily). Other edges are termed \emph{light}. 
	We consider inclusion-wise maximal paths consisting of heavy edges, and for each highest node $v$ of such a path $p$, we add to the path $p$ the edge from $v$ to its parent (if any). The obtained paths form the heavy path decomposition.
\end{Def}

It is easy to see that this is indeed a decomposition of the tree; that is, that each edge belongs to exactly one path in the heavy path decomposition. Moreover, each root-to-leaf path intersects at most $\log_2n$ heavy paths, as each such path can have at most $\log_2n$ light edges because the number of nodes in a subtree decreases by at least a factor of two every time one traverses down a light edge. This will allow us to decompose the trees into ``short paths'' such that each root-to-leaf path intersects few short paths. 
Specifically, we define a refinement of this decomposition in a top-down fashion, by breaking up each heavy path into short paths of length at most $\log_2 n$; that is, starting from the top of a heavy path of length $l$, we cut it into $\lceil l / \log n\rceil$ short paths. See \Cref{fig:UBGraphDecomp}.
Both the decomposition and its refinement exist, and are even computable deterministically in linear time.

\begin{figure}
	\centering
	\begin{subfigure}[t]{0.32\textwidth}
		\centering
		\begin{tikzpicture}[scale=1]
		\drawUBGraph
		\end{tikzpicture}
		\caption{Multicast tree $T_i$.}
	\end{subfigure}%
	\begin{subfigure}[t]{0.32\textwidth}
		\centering
		\begin{tikzpicture}[scale=1]
		\drawUBHeavyLight
		\end{tikzpicture}
		\caption{Heavy-light decomposition.}\label{subfig:hlDecomp}
	\end{subfigure}%
	\begin{subfigure}[t]{0.32\textwidth}
		\centering
		\begin{tikzpicture}[scale=1]
		\drawUBChopPaths
		\end{tikzpicture}
		\caption{Cut heavy paths into short paths.}\label{subfig:decompCutPaths}
	\end{subfigure}\caption{Our decomposition for multicast tree $T_i$. Each heavy path in \Cref{subfig:hlDecomp} and short path in \Cref{subfig:decompCutPaths} drawn in different colors. Notice the far right path is cut into two short paths since its length is $5 > \log_2 n =4$.}\label{fig:UBGraphDecomp}
\end{figure}
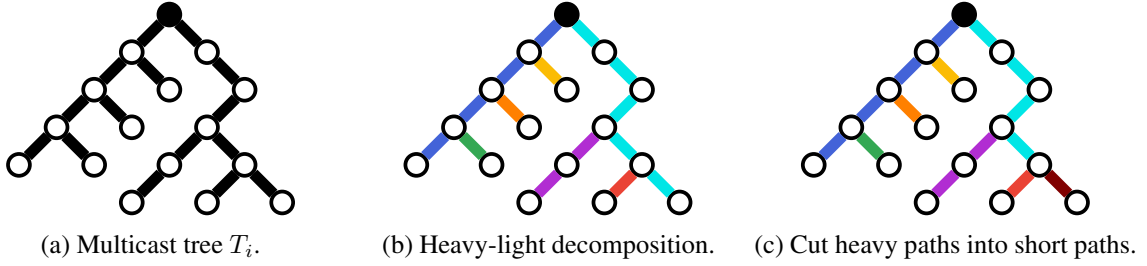

As each root-to-leaf path intersects at most $\log_2n$ heavy paths, this refined decomposition has each root-to-leaf path intersect at most $\frac{D}{\log_2 n} + \log_2 n$ short paths. We will refer to such a decomposition as a $(\log n,\log n)$-short (path) decomposition. We use this particular name as we generalize this notion further in \Cref{sec:algorithms} to $(l, k)$-decompositions for any integers $k$ and $l$.
This refined path decomposition together with some additional random delays will allow us to reduce the task of \sm to that of $O\big(\frac{C}{\log n}+\frac{D}{\log n}+\log n\big)$ many \su instances with congestion and dilation $O(\log n)$, from which we obtain the following result. We illustrate the schedules in this result in \Cref{fig:UBSchedule}.

\upperBoundCentralized*
\begin{proof}
	We prove this result by means of the probabilistic method. 
	First, we consider a $(\log n,\log n)$ decomposition of each multicast tree. 
	For each short path $p$ in the $(\log n,\log n)$-decomposition of a tree, we say $p$ is at \emph{level} $j$ if there are exactly $j-1$ other short paths between $p$'s root and the tree's root. That is, if we were to schedule a particular tree by forwarding along all paths of level $j=1,2,\dots$ during $R=O(\log n)$ rounds, the path of level $j$ would be scheduled in rounds $((j-1)\cdot R, j\cdot R]$, which we refer to as the $j$-th \emph{frame}.
	Our goal will be to schedule the sets of short paths with limited congestion in parallel, using \su schedules guaranteed by \citet{leighton1994packet}. 
	
	In order to break up \sm to multiple \sus, we shift the levels of each tree $T_i$ by a random offset $X_{T_i}$ chosen uniformly in $[C/\log n]$. Now a short path of level $j$ in tree $T_i$ will be scheduled during frame $j+X_{T_i}$.
	Since each edge $e$ has congestion $C$, the expected number of paths of different trees that use $e$ during any given frame is at most $O(\log n)$. So, by standard Chernoff concentration inequalities, the congestion of each edge during any frame is at most $O(\log n)$ w.h.p.  Therefore, applying a union bound over all edges and time frames, we find that w.h.p.,~all edges have congestion at most $O(\log n)$ for all (shifted) frames $j=1,2,\dots,\frac{C}{\log n}+\frac{D}{\log n}+\log n$ (recall that each root-to-leaf path intersects at most $\frac{D}{\log n}+\log n$ paths of length at most $\log n$). 
	In particular, there exist random delays such that each time frame consists of a \su instance with congestion $C'=O(\log n)$ and dilation $D'=O(\log n)$. Therefore, by  \citet{leighton1994packet,leighton1999fast}, there exists a schedule of length $O(C'+D')=O(\log n)$ for these time frames' \sus.
	Combining these schedules one time frame after another, we obtain a schedule of length 
	\begin{align*}
	\left(\frac{C+D}{\log n} + \log n\right)\cdot O(\log n) &= O(C+D+\log^2n).\qedhere
	\end{align*}
\end{proof}

\begin{figure}[h]
	\centering
	\begin{subfigure}[t]{0.32\textwidth}
		\centering
		\begin{tikzpicture}[scale=1]
		\drawUBScheduleA
		\end{tikzpicture}
		\caption{Wait for $T_i$'s random delay.}
	\end{subfigure}%
	\begin{subfigure}[t]{0.32\textwidth}
		\centering
		\begin{tikzpicture}[scale=1]
		\drawUBScheduleB
		\end{tikzpicture}
		\caption{Run LMR on $P_1$, $P_2$.}
	\end{subfigure}%
	\begin{subfigure}[t]{0.32\textwidth}
		\centering
		\begin{tikzpicture}[scale=1]
		\drawUBScheduleC
		\end{tikzpicture}
		\caption{Run LMR on $P_3, \ldots, P_8$.}
	\end{subfigure}%
	\caption{Our multicast schedule on $T_i$ using the decomposition from \Cref{fig:UBGraphDecomp}. Nodes with $m_i$ colored in black. Short unicast paths are dashed in black. ``LMR'' is the schedule given by \citet{leighton1994packet}.}
	\label{fig:UBSchedule}	
\end{figure}
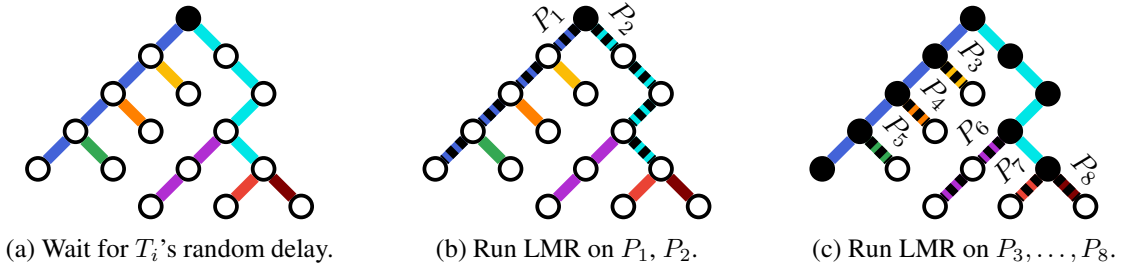

The above proof can be made algorithmic, deterministic, and even allows for efficient distributed algorithms. See \Cref{sec:algorithms} for details.

\section{Additive $\Omega(\log n)$ Necessary}\label{sec:AddLogn}
In this section we use our $\Omega(CD)$ lower bound (\Cref{lem:LB}) to show that any \sm bound of the form $O(C + D) + f(n)$ must have $f(n) = \Omega(\log n)$ (\Cref{lem:LBadd}). This result demonstrates the near optimality of the length $O(C + D + \log^2 n)$ schedules we gave in the previous section.
\lowerBoundAdd*
\begin{proof}[Proof of \Cref{lem:LBadd}]
	Assume for the sake of contradiction that every \sm instance admitted a schedule of length $\alpha(C + D) + f(n)$ for constant $\alpha$ and $f(n) = o(\log n)$. 
	
	Let $D = 4 \alpha$ and let $C = \frac{\log n}{2^{4\alpha + 1}}$ for $n$ to be fixed later. Consider the \sm instance given by $M_{\mcS}$ on graph $G_{\mcS}$ as defined in \Cref{sec:lowerBound} whose properties are given by \Cref{lem:LB}. Notice that $C2^{D+1} = \frac{\log n}{2^{4\alpha + 1}} 2^{4 \alpha + 1} = \log n$ and so indeed we may apply \Cref{lem:LB}.
	
	 Furthermore, by \Cref{lem:LB} we have that the optimal schedule of this \sm instance with congestion $C$, dilation $D$ and $n$ nodes has length at least
	\begin{align*}
	L := \frac{CD}{2} = \frac{\alpha}{2^{4\alpha}} \log n.
	\end{align*}
	
	But, by our assumption for contradiction we have that this instance admits a schedule of length at most
	\begin{align*}
	U :=	\alpha(C + D) + f(n) = 	\frac{\alpha}{2^{4\alpha + 1}}\log n + 4\alpha^2 + f(n).
	\end{align*}
	
	We then have a contradiction because $U <  L$. In particular for $n$ sufficiently large,
	\begin{align*}
	U - L & = -\frac{\alpha}{2^{4\alpha + 1}}\log n + 4\alpha^2 + o(\log n) < 0.\qedhere
	\end{align*}
\end{proof}

\section{Future Directions}
We conclude our paper with future directions for work in the scheduling of \sms. Of course, one can try and tighten the polylogarithmic additive terms in our results. More interestingly, one could extend the \sm setting in ways similar to how the \su scheduling work of \citet{leighton1994packet,leighton1999fast} has been extended. 

We give two notable examples. First, one could study what sort of approximation algorithms are possible if one is permitted to choose the trees over which multicast is performed as was done in the \su setting \cite{srinivasan2001constant,bertsimas1999asymptotically,koch2009real}. 
Roughly speaking, this corresponds to a depth-bounded version of the multicast congestion problem  \cite{vempala1999approximating,carr2002randomized,jansen2002approximation}.
We point out that choices of trees with optimal congestion + dilation (or nearly-optimal, up to constant multiplicative and additive polylogarithmic terms) combined with our algorithm to output length $O(C + D + \log^2 n)$-length schedules would imply near-optimal simultaneous multicasts for this setting.
Second, we note that our schedules have logarithmic-sized edge queues. That is, messages may have to wait up to $\Theta(\log n)$ rounds before being sent over an edge. This is not due to our use of the schedules of \citet{leighton1994packet}, whose queue sizes are constant, but rather due to $\Theta(\log n)$ messages arriving to a node by the end of simultaneous unicast frames used in our schedules. An interesting open question is whether there exist efficient \sm schedules which minimize both time and edges' queue sizes.


\appendix

\section{Deferred (Lower Bound) Content of \Cref{sec:lowerBound}}\label{app:lowerBound}

\lowerBoundIsMulticast*
\begin{proof}
	We prove this by induction on $D$. Let $T_\chi$ be the graph induced by label $\chi$ and let $r_\chi$ be $\chi$'s root. We will prove the slightly stronger claim that $T_\chi$ is a tree containing $r_\chi$ and for any two labels $\chi \neq \chi'$ we have $r_\chi \in T_{\chi'}$ only if $r_\chi = r_{\chi'}$. Call this latter property $\ostar$.
	
	As a base case suppose that $D = 1$. We then have that $\mcS = (S_1)$ is a single set of labels and so by definition of $M_\mcS$ we have that our graph will consist of a single edge $(r, v)$ where $r$ is the root for every label and $(r,v)$ is labeled by every label in $S_1$. Since each label induces the edge $(r,v)$ where $r$ is the root for this label, clearly every label induces a rooted tree containing its root. Moreover, $\ostar$ holds since every label has the same root.
	
	As an inductive hypothesis suppose that for any $D' < D$ and $\mcS'$ of size $C \cdot 2^{D'-1}$, we have that every label in $M_{\mcS'}$ induces a tree containing the label's root and all induced trees in $M_{\mcS'}$ satisfy $\ostar$. Thus, our inductive hypothesis tells us that every label in $M_{\mcS'}$ for $\mcS' \in I(\mcS)$ induces a tree containing the label's root where all labels satisfy $\ostar$.
	
	We will first verify that each $T_\chi$ is a tree containing $r_\chi$. Clearly, since the edge leaving $r_\chi$ is labeled $\chi$, we have that $r_\chi \in T\chi$. Let $\mcT_\chi'$ be all trees induced by label $\chi$ in $M_\mcS'$ for $\mcS' \in I(\mcS)$. $M_{\mcS}$ is created by taking the disjoint union of $G_{\mcS'}$ for $\mcS' \in I(\mcS)$, identifying several roots of $M_{\mcS'}$ trees and then adding new roots and edges. Notice that for any $\chi$ this identifying of nodes as the same nodes does not cause any cycles in a $T_\chi$ since by $\ostar$ we identify exactly one node from each tree in $\mcT_\chi'$ with another node. Next, to see that $\ostar$ still holds notice that if a node is designated a root in $M_\mcS$ then it is incident to a single edge and is a root for every label this edge was assigned.
\end{proof}

\correctCAndD*
\begin{proof}
	As each edge receives $C$ labels in our construction, each of which corresponds to a multicast tree, clearly the congestion is $C$.
	For the dilation, we prove by induction on $D$. As a base case notice that if $D$ is 1 then $|\mcS|$ is $1$ and so $G$ consists of a single edge used by all all trees, giving a dilation of $1$. Suppose that for $D' < D$ we have that the dilation of $M_{\mcS'}$ is $D'$ where $|\mcS'| = C \cdot 2^{D-1}$. The claim follows by simply noticing that each tree in $M_{\mcS}$ extends the root of every tree in $M_{\mcS'}$ by $1$ edge.
\end{proof}

\nUpperBound*
\begin{proof}
	
	Clearly, to upper bound the total number of vertices it suffices to upper bound the total number of edges  introduced. Thus, we will count the number of edges introduced at each level of our recursion. 
	
	Fix $C$. Define $m_D := |E(G_\mcS)|$. We claim by induction on $D$ that $ m_D \leq 2^{C(2^{D}) + D}$. As a base case notice that when $D = 1$ we have $m_1 = 1 \leq 2^{C(2^D) + D}$. For our inductive step consider $G_\mcS$. $G_\mcS$ is constructed by introducing $2^{D-1}$ edges and unioning together $G_{\mcS'}$ for $\mcS' \in I(\mcS)$ of which there are ${C \choose C/2}^{2^{D-1}}$, each of which have $m_{D-1}$ edges. Thus, we have 
	\begin{align*}
	m_D &= 2^{D-1} + {C \choose C/2}^{2^{D-1}} m_{D-1}\\
	& \leq 2^{D-1} + 2^{C2^{D-1}}2^{C2^{D-1} + D - 1} \bc{${C \choose C/2 } \leq 2^C$ and inductive hypothesis}\\
	& \leq 2^{D-1} + 2^{C2^{D} + D - 1}\\
	& \leq 2^{C2^D + D} \bc{$2^{D-1} \leq 2^{{C2^{D} + D - 1}}$}
	\end{align*}
	Finally, we conclude the (somewhat loose) bound of $2^{C2^{D+1}}$ on the number of vertices in our graph since $D \leq C2^D$.
\end{proof}

\section{Algorithmic Results}\label{sec:algorithms}

In this section we present centralized and distributed algorithms for the computation of \sm schedules of length $O(C+D+\log^2n)$, as guaranteed to exist by \Cref{thm:UBCentralized}. 

\subsection{Centralized Algorithm}\label{sec:centralized}

It is easy to see that the probabilistic method proof in \Cref{thm:UBCentralized} yields a randomized algorithm which succeeds with high probability. Moreover, by standard limited independence methods \cite{schmidt1995chernoff}, one can make this algorithm deterministic.

\begin{thm}
	There exists a deterministic, centralized algorithm which, given a \sm instance, outputs a schedule of length $O(C + D + \log^2 n)$ in time polynomial in $|\mcT|$ and $n$.
\end{thm}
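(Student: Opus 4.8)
The plan is to turn the probabilistic-method proof of \Cref{thm:UBCentralized} into a deterministic polynomial-time construction; the only source of randomness there is the collection of per-tree offsets $X_{T_i}\in[C/\log n]$. First I would recall two facts established earlier: the $(\log n,\log n)$-short path decomposition of each multicast tree is computable deterministically in linear time, and \citet{leighton1999fast} (together with its derandomization) gives a deterministic polynomial-time algorithm that, on a \su instance with congestion $C'$ and dilation $D'$, outputs a schedule of length $O(C'+D')$. Hence, once offsets are fixed so that every one of the $O\!\big(\tfrac{C+D}{\log n}+\log n\big)$ frames is a \su instance with congestion and dilation $O(\log n)$, we simply invoke this subroutine frame by frame and concatenate. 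Computing the decompositions is linear, there are polynomially many frames (since $C\le|\mcT|$ and $D\le n$), and each subroutine call is polynomial, so the total running time is $\poly(|\mcT|,n)$; the resulting schedule has length $\big(\tfrac{C+D}{\log n}+\log n\big)\cdot O(\log n)=O(C+D+\log^2 n)$.

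The crux is therefore derandomizing the choice of offsets, which I would do by the method of conditional expectations with a pessimistic estimator. For each edge $e$ of $G$, each frame $j$, and each tree $T_i$ with $e\in T_i$, let $Y_{e,j,i}$ be the indicator that the unique short path of $T_i$ through $e$ is scheduled in frame $j$; since $e$ lies on a single short path of $T_i$ at a fixed level, $Y_{e,j,i}$ is a function of $X_{T_i}$ alone, and the $X_{T_i}$ are independent across trees. Fix a suitable constant $c$ and a parameter $t=\Theta(1)$, let the bad event $B_{e,j}$ be $\sum_i Y_{e,j,i}>c\log n$, and define
\[
\Phi \;=\; \sum_{e,j} e^{-tc\log n}\prod_{i}\E\!\big[e^{tY_{e,j,i}}\big],
\]
where the expectations are conditioned on whichever offsets have already been fixed. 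The Chernoff-via-moment-generating-function bound together with the union bound over all pairs $(e,j)$ — exactly the estimate already carried out in the proof of \Cref{thm:UBCentralized} — shows $\Phi<1$ at the start. Each factor $\E[e^{tY_{e,j,i}}]$ equals a known value for an already-fixed tree and equals $1+p_{e,j,i}(e^t-1)$ with $p_{e,j,i}\in\{0,\log n/C\}$ for a still-free tree, so $\Phi$ and each of its conditional expectations over a single remaining offset are computable in polynomial time, and $\Phi$ does not increase when we set that offset to the value minimizing the resulting conditional expectation. Processing the trees one at a time greedily, we reach a full assignment of offsets with $\Phi<1$; since $\Phi$ with all offsets fixed is at least the indicator that some $B_{e,j}$ holds, no $B_{e,j}$ holds, i.e.\ every edge has congestion at most $c\log n=O(\log n)$ in every frame.

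With the offsets in hand, running the \citet{leighton1999fast} \su subroutine on each frame and concatenating yields the claimed $O(C+D+\log^2 n)$-length schedule, and the entire procedure is deterministic and runs in $\poly(|\mcT|,n)$ time. (As the paper notes, an alternative is to replace the fully independent offsets by $O(\log n)$-wise independent ones, invoking the Chernoff bounds under limited independence of \citet{schmidt1995chernoff} and searching the associated sample space; I would present the conditional-expectations route as the primary one because it keeps the running time cleanly polynomial.) The step I expect to require the most care is verifying that $\Phi$ is a bona fide pessimistic estimator throughout the process — that it simultaneously dominates the union-bound failure probability and stays efficiently computable under conditioning — which rests on the two structural facts that each $Y_{e,j,i}$ depends on only one offset and that the offsets are mutually independent across trees.
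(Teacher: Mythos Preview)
Your proposal is correct, and the overall reduction---deterministic short-path decomposition, derandomize the offsets, then invoke the deterministic \citet{leighton1999fast} scheduler per frame---is exactly the paper's. The divergence is in the derandomization step: the paper does not use the method of conditional expectations at all, but instead observes that the Chernoff bound in the proof of \Cref{thm:UBCentralized} goes through with $O(\log n')$-wise $\tfrac{1}{\poly n'}$-approximate independence (for $n'=n+|\mcT|$), cites \citet{schmidt1995chernoff}, and brute-forces the $2^{O(\log n')}=\poly(n')$ seeds generating such a family. Your pessimistic-estimator route is equally valid and arguably more self-contained, since it avoids the limited-independence Chernoff machinery; the paper's route is shorter on the page because it outsources all the work to a citation. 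Amusingly, you mention the limited-independence alternative in your final parenthetical as the backup option, whereas the paper makes it the entire argument. One small quibble: your claim that $p_{e,j,i}\in\{0,\log n/C\}$ tacitly uses that each edge of $T_i$ lies on exactly one short path at one fixed level, so exactly one offset value schedules it in a given frame---you should state this explicitly, since it is what makes the product form of $\Phi$ exact and efficiently updatable.
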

\begin{proof}
	Let us begin by explaining why the proof of \Cref{thm:UBCentralized} immediately yields a polynomial-time randomized algorithm which succeeds with high probability. Recall that the schedules in \Cref{thm:UBCentralized} were produced by taking a $(\log n, \log n)$-short decomposition of each $T_i$, delaying each $T_i$ by $X_{T_i} \sim [C / \log n]$ and then concatenating together unicast schedules given by \citet{leighton1994packet}.  As noted in \Cref{sec:upper-bound},  a $(\log n, \log n)$-short decomposition can be computed deterministically in polynomial (in fact, linear) time. Clearly, drawing a random delay from $[C / \log n]$ for each $T_i$ is also doable in polynomial time by a randomized algorithm. Lastly, by \citet{leighton1999fast}, the schedules of \citet{leighton1994packet} can be computed deterministically in polynomial time. By \Cref{thm:UBCentralized} the resulting schedule is of the appropriate length.
		
	Let us now explain how this algorithm can be made deterministic. Let $n' = n + |\mcT|$. The only randomization used in the above algorithm is the random delays drawn from $[C / \log n]$. As with most proofs that show concentration by Chernoff bounds, it is easy to see that each $X_{T_i}$ need only be $\tfrac{1}{\poly n'}$-approximate, $O(\log n')$-wise independent for the above algorithm to succeed with high probability in $n'$. Recalling that one can generate polynomially-many binary $\tfrac{1}{\poly n'}$-approximate $O(\log n')$-wise independent random variables with only $O(\log n')$ random bits, our deterministic algorithm can simply brute force over all possible assignments to these $O(\log n')$ bits, and check if each resulting schedule is of the appropriate length. The result is a deterministic algorithm which is polynomial-time in $|\mcT|$ and $n$ and outputs a schedule of the appropriate length. For more background on limited independence, see \citet{schmidt1995chernoff}.
\end{proof}

\subsection{Distributed Algorithm}\label{sec:distributed}

In this section we give our distributed \sm algorithm in the \CONGEST model.

In the classic \CONGEST model of distributed communication \cite{peleg2000distributed}, a network is modeled as an undirected simple $n$-node graph $G=(V,E)$. Communication is conducted over discrete, synchronous rounds. During each round each node can send an $O(\log n)$-bit message along each of its incident edges. Every node has an arbitrary and unique ID of $O(\log n)$ bits, first only known to itself (this is the $KT_0$ model of \citet{awerbuch1990trade}).

In the \CONGEST model in a \sm instance, each node initially knows a unique ID associated with each tree $T_i$ to which it belongs, as well as which of its incident edges occur in which trees. We think of $m_i$ in this setting as being an $O(\log n)$-bit message, which is therefore transmittable along an edge in a single round. As in the centralized version of the problem, initially only $r_i$ knows $m_i$.

Our main result for this section is as follows.
\begin{restatable}{thm}{upperBoundDistributed}
	\label{thm:UBDistributed} 
	For any constant $\epsilon>0$, there exists a \CONGEST algorithm which given access to shared randomness solves \sm in time 
	\begin{align*}
	O\left((C+D)\cdot \left(1+\frac{\log \min \{C,D\}}{\log \log n}\right)+\log^{2+\epsilon}n\right).
	\end{align*}
	with high probability. If nodes also know their depth in each tree, then there exists another \CONGEST algorithm which solves \sm in $O(C+D + \log^{2+\epsilon})$ time.
\end{restatable}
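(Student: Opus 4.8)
The plan is to turn the probabilistic-method argument behind \Cref{thm:UBCentralized} into a \CONGEST algorithm by implementing each of its three ingredients distributedly: (i) computing a short path decomposition of every multicast tree, (ii) drawing the random frame-offsets $X_{T_i}$, and (iii) running, frame by frame, a distributed \su routine for each frame's low-congestion, low-dilation \su instance. The schedule is produced \emph{online}: each node stores, for every tree $T_i$ it belongs to and every incident $T_i$-edge, which short path that edge lies on and the level of that short path, and forwards $m_i$ along a level-$j$ short path of $T_i$ only during the global frame $j+X_{T_i}$. The decomposition data must be computed and kept locally because tree IDs and tree membership are only locally known.

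Ingredient (ii) is immediate given shared randomness: with $n' := n+|\mcT|$, the nodes agree on $O(\log n')$ shared bits specifying a hash function from a family of $\tfrac{1}{\poly n'}$-approximate, $O(\log n')$-wise independent functions (cf.\ \citet{schmidt1995chernoff}), and each node locally evaluates it on its tree IDs to obtain the $X_{T_i}$'s; exactly as in the centralized deterministic algorithm, $O(\log n')$-wise independence suffices for the Chernoff bounds controlling per-frame congestion. For ingredient (iii), once the decomposition and offsets are fixed, each frame is a \su instance; using an $(l,\log n)$-short decomposition with short-path length bound $l$ and offsets drawn from $[C/l]$, a frame has congestion $C'=O(l+\log n)$ and dilation $D'=O(l)$, and there are $O\big(\tfrac{C+D}{l}+\log n\big)$ frames. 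I would run each frame with a known distributed \su routine --- the $O(C'+D'+\log^{1+\epsilon}n)$-round algorithm of \citet{ostrovsky1997universal}, or that of \citet{rabani1996distributed}, or the $O\big((C'+D')\cdot\tfrac{\log n}{\log\log n}\big)$ random-delay schedule of \citet{leighton1994packet} --- and then move to the next frame. Taking $l\approx\log^{1+\epsilon}n$ makes the per-frame cost $O\big(l+\log^{1+\epsilon}n\big)=O(\log^{1+\epsilon}n)$, so the frames contribute $O(C+D+\log^{2+\epsilon}n)$ overall.

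The crux is ingredient (i). The short path decomposition refines the heavy path decomposition of \citet{sleator1983data}, so every node must learn, for each of its trees, the subtree sizes of its children --- a \emph{simultaneous convergecast} along the (overlapping) input trees --- and the frame assignment also requires each node to learn the levels of its short paths, a \emph{simultaneous downward broadcast} along those trees. Since every edge lies in at most $C$ trees and every tree has depth at most $D$, \emph{when nodes already know their depth in each tree} these aggregations can be pipelined deterministically in $O(C+D)$ rounds in the style of \citet{topkis1985concurrent,lenzen2013efficient} (sweep depth-layer by depth-layer, breaking ties between trees by a fixed per-tree rank), which yields the faster bound $O(C+D+\log^{2+\epsilon}n)$. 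Without depth information the pipeline cannot be coordinated this way, and instead I would route the convergecast and the downward broadcast themselves using random delays over a window of size $\min\{C,D\}$; the resulting balls-into-bins load balancing costs the extra factor $1+\tfrac{\log\min\{C,D\}}{\log\log n}$ on the $C+D$ term that appears in the statement.

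The hard part will be exactly this last step: implementing the decomposition in \CONGEST and tightly accounting for the overhead of the two aggregations it needs, in particular pinning down the right delay-window size and the ensuing load bound in the depths-unknown case. A secondary task is to check the glue --- that the chosen per-frame \su routine runs correctly in \CONGEST on the $O(\log n)$-bit messages $m_i$, that concatenating frames is correct (the top of a level-$j$ short path of $T_i$ is the bottom of its level-$(j-1)$ short path, which was served in the earlier frame $(j-1)+X_{T_i}$, so by induction on levels $m_i$ is present whenever it is needed, with the root supplying the base case), and that a union bound over all edges and all $O\big(\tfrac{C+D}{l}+\log n\big)$ frames leaves the high-probability guarantee intact.
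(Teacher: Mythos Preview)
Your high-level plan---decompose, delay, then run a distributed \su routine frame by frame---matches the paper, but ingredient (i) as you describe it has a genuine gap, and the gap is different in the two regimes.

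\textbf{Depths-known case.} You claim that simultaneous convergecast of subtree sizes (and then a downward sweep of level labels) can be pipelined deterministically in $O(C+D)$ rounds \`a la Topkis/Lenzen--Peleg. But simultaneous convergecast along the input trees is, in terms of scheduling constraints, exactly simultaneous multicast with the direction reversed: each node must wait for all its children in tree $T$ before sending up in $T$, and each edge lies in up to $C$ trees. The paper's own $\Omega(CD)$ lower bound therefore applies to this subroutine, so no $O(C+D)$ pipeline exists in general; the Topkis/Lenzen--Peleg results are for $k$ messages along a \emph{single} tree (or BFS), not $C$ overlapping trees. The paper avoids this circularity entirely: it never computes a global decomposition in the depths-known case. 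Instead it uses depth information only to slice each tree into blocks of $L=\log^{2+\epsilon}n$ consecutive levels, applies random delays in $[\lceil C/L\rceil]$, and then invokes the \emph{first} algorithm on each block's sub-instance; since that sub-instance has $C',D'=O(L)$, the overhead factor $1+\log\min\{C',D'\}/\log\log n$ collapses to $O(1)$, and the $O(C/L+D/L)$ frames of cost $O(L)$ give $O(C+D+L)$.

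\textbf{Depths-unknown case.} Here the issue is message size, not load. You propose to aggregate subtree sizes, which are $\Theta(\log n)$-bit quantities. With random delays in $[C]$, an edge carries $O(\log n/\log\log n)$ tree-messages per time step w.h.p., but $\Theta(\log n)$-bit payloads cannot be packed into a single \CONGEST message, so you would pay an extra $\log n/\log\log n$ factor, not the stated one. The paper's fix is to replace heavy paths by a \emph{rank-based} decomposition: ranks are at most $\log_2 n$, hence encodable in $\log\log n$ bits, and the $O(\log n/\log\log n)$ ranks per edge per step \emph{do} fit in one $O(\log n)$-bit message. The $\log\min\{C,D\}$ term is the number of additional bits needed to tag each rank with its tree---either the tree's index among the $\le C$ trees on that edge, or the sender's height $\le D$ in that tree---so each packed message is $\log\log n+\log\min\{C,D\}$ bits, yielding the $1+\log\min\{C,D\}/\log\log n$ overhead. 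It is a bit-packing term, not a balls-into-bins load from a delay window of size $\min\{C,D\}$ as you suggest (the window is $[C]$, not $[\min\{C,D\}]$).

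In short, the missing idea for the first bound is the rank-based decomposition with $\log\log n$-bit messages, and the missing idea for the second bound is bootstrapping the first algorithm on polylog-depth slices rather than trying to compute a global decomposition faster.
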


As noted above, \sm has proven to be a crucial subroutine in many recent algorithms in \CONGEST for fundamental problems like MST, shortest path and approximate min cut. Therefore, improving simultaneous multicast in the \CONGEST model is an important step towards obtaining better algorithms for many of these fundamental problems.
Furthermore, in the above applications of \sm the parameters $C$ and $D$ are equal to the diameter of the graph up to polylogarithmic in $n$ terms provided the input graph has certain structure such as being planar \cite{ghaffari2016distributedII,haeupler2016low,ghaffari2018new,haeupler2016near}. If $C$ and $D$ are sufficiently large polylogarithmic terms, i.e., $\max\{C,D\}=\Omega(\log^{2+\epsilon}n)$, then, assuming nodes know their heights, our distributed algorithm gives an optimal $O(C+D)$ time distributed algorithm. Thus, we view our distributed algorithm as an important step towards obtaining better algorithms for many distributed problems, including MST, shortest path and approximate minimum cut.

Before, proceeding, let us discuss the preprocessing assumptions in \Cref{thm:UBDistributed}. Our distributed algorithms assume nodes have access to shared randomness or to their height in each of their incident multicast trees. Both of these assumptions can be dispensed with provided nodes are allowed to do some preprocessing: see \citet{ghaffari2015distributed} for how to share randomness and note that nodes can compute their heights by a single \sm computation where we could, for e.g.\ use the aforementioned $O(C + D \log n)$ length schedules. If this preprocessing is performed only once and many \sms are performed, its cost amortizes away.
Thus, provided nodes share randomness we have that after a preprocessing step equivalent to the current state of the art distributed \sm algorithm, subsequent \sms can be performed in time $O(C+D+\log^{2+\epsilon}n)$, which as discussed earlier, is essentially as close as one can get to a bound of $O(C + D)$.


\subsubsection{Intuition and Overview}
Before moving on, we will provide an intuition for and an overview of our results. As mentioned earlier, \citet{ostrovsky1997universal} provided a distributed algorithm for \su using $O(C + D + \log^{1+\epsilon} n)$ rounds. Since our centralized algorithm has shown that \sm can be reduced to \su by way of a $(\log n, \log n)$-short decomposition, the focus of our distributed algorithm is the efficient distributed computation of a $(\log n, \log n)$-short decomposition.

The challenge of computing such a decomposition in a distributed manner is that it seems as hard as solving simultaneous multicast. In particular, computing a heavy path decomposition requires that every node in a $T_i$ aggregate information from all of its children. It is not hard to see that performing such a ``convergecast'' at every node can be seen as performing a multicast on every $T_i$ in reverse. Even worse, the message size sent by nodes to their parents in such a convergecast to compute a heavy path decomposition must consist of  $\log_2 n$ bits to count the size of their sub-tree; i.e.\ sending just one such message fully uses the bandwidth of a \CONGEST link in one round. Thus, it seems that if we want to solve \sm by using a $(\log n, \log n)$-short decomposition, then we must circularly solve a simultaneous convergecast---i.e.\  \sm in reverse---in which large messages must be sent.

However, we show that, in fact, one can compute what is essentially a $(\log n, \log n)$-short decomposition more efficiently than one can solve \sm. In particular, we show how to efficiently compute what we call a $(\log ^{1 + \epsilon} n, \log n)$-short decomposition. We demonstrate that a $(\log ^{1 + \epsilon} n, \log n)$-short decomposition for every $T_i$ can be efficiently computed in a distributed fashion by using a ``rank-decomposition'' rather than a heavy path decomposition. Computing a rank-decomposition will require nodes to send exponentially fewer bits to their parents than computing a heavy path decomposition. By exploiting this exponential decrease in the total number of bits that must be passed, we are able to efficiently pack rank information into sent messages and compute a $(\log ^{1 + \epsilon} n, \log n)$-short decomposition. We are then able to translate our centralized algorithm to the distributed setting by making use of the distributed  \su algorithms of \citet{ostrovsky1997universal}.
\subsubsection{$(\log ^{1 + \epsilon} n, \log n)$-Short Decomposition Using a Rank Decomposition}
The decomposition which we compute will not be exactly identical to those of our centralized algorithms and so we generalize these decompositions as follows.

\begin{Def}\label{def:good-decomposition}
	For any integers $k$ and $\ell$, we say a path decomposition of a tree of depth $D$ is \emph{$(\ell,k)$-short} if each root-to-leaf path in the tree intersects at most $D/\ell+k$ paths of the decomposition.
\end{Def}
\noindent We note that some trees do not admit an $(O(\log n),k)$-short path decomposition with $k = \omega(\log_2n)$---for example, it is easy to see that a complete binary tree on $n$ nodes admits no such decomposition.


We now define our rank-based decompositions as follows; to our knowledge the notion of rank we use here first appeared in the union find data structure \cite{tarjan1984worst}. 

\begin{Def}[Rank-based path decomposition]
	A \emph{rank-based path decomposition} of a rooted tree $T$ is obtained in a bottom-up fashion, as follows. Each leaf $v$ has rank zero; i.e., $rank(v)=0$. 
	Each internal node $v$ with children set $child(v)$ has rank 
	\[
	rank(v) = \begin{cases}
	\max_{u\in child(v)}\{rank(u)\} & |\arg\max_{u\in child(v)}\{rank(u)\}|=1 \\
	\max_{u\in child(v)}\{rank(u)\} + 1 & else.
	\end{cases}
	\]
	Each non-leaf node selects one \emph{preferred} edge, which is an edge to a child of highest rank (breaking ties arbitrarily). We consider inclusion-wise maximal paths consisting of preferred edges, and for each highest node $v$ of such a path $p$, we add to the path $p$ the edge from $v$ to its ancestor (if any). The obtained paths form the rank-based path decomposition.
\end{Def}

As with heavy path decompositions, the above is clearly a path decomposition of the tree. 
Moreover, for this decomposition, too, each root-to-leaf path intersects at most $\log_2n$ paths of the decomposition, due to the following simple observation, which follows by induction on the nodes' heights.

\begin{obs}\label{rank-size}
	Each node $v$ of rank $i$ has at least $2^i$ descendants.
\end{obs}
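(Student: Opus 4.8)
The plan is to prove the observation by induction on the height of $v$ in the tree, as the paper suggests, interpreting ``at least $2^i$ descendants'' as ``the subtree rooted at $v$ contains at least $2^i$ nodes'' (so that $v$ is counted among its own descendants; this convention is what makes the rank-$0$ case come out right).

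First I would handle the base case: a leaf $v$ has height $0$ and $rank(v)=0$ by definition, and its subtree is the singleton $\{v\}$, of size $1 = 2^0$. For the inductive step, let $v$ be an internal node with $rank(v)=i$, and set $M = \max_{u \in child(v)} rank(u)$. I would then split into the two cases of the rank definition. If the maximum $M$ is attained by a unique child $u^\ast$, then $rank(v)=M$, hence $rank(u^\ast)=i$; since $u^\ast$ has strictly smaller height than $v$, the induction hypothesis gives that $u^\ast$'s subtree has at least $2^i$ nodes, all of which lie in $v$'s subtree, so $v$ has at least $2^i$ descendants. If instead $M$ is attained by two distinct children $u_1 \neq u_2$, then $rank(v)=M+1=i$, so $rank(u_1)=rank(u_2)=i-1$; applying the induction hypothesis to each of $u_1$ and $u_2$, and using that distinct children of a node have disjoint subtrees, the subtree of $v$ contains at least $2^{i-1}+2^{i-1}=2^i$ nodes.

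There is no genuine obstacle here: the proof is a two-line case analysis that simply mirrors the definition of rank, with the crucial doubling in the second case coming for free from the disjointness of children's subtrees in a tree. The one point worth stating carefully is the convention for ``descendants'' (whether $v$ counts itself), which should be pinned down at the outset so that the leaf base case reads $1 \ge 2^0$ rather than $0 \ge 1$; everything else is routine. As a remark, this immediately yields that every rank is at most $\log_2 n$, which is exactly what is needed to conclude that each root-to-leaf path meets at most $\log_2 n$ paths of the rank-based decomposition.
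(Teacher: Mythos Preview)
Your proposal is correct and is exactly the approach the paper indicates: the paper merely states that the observation ``follows by induction on the nodes' heights,'' and your two-case induction on height, mirroring the definition of rank, is precisely that argument. Your care in fixing the convention that $v$ counts as its own descendant (so the leaf case gives $1=2^0$) is appropriate and matches the only sensible reading of the statement.
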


Another consequence of \Cref{rank-size} is that no node has rank greater than $\log_2 n$. This in particular implies that nodes can send their rank information using only $\log \log n$ bits. This will prove useful when trying to compute these  rank-based decomposition, by relying on random offsetting and appropriate bit-packing. 

\begin{lem}\label{distributed-path-decomposition}
	For any $\epsilon\geq 0$, there exists a \CONGEST algorithm which, given a simultaneous multicast instance and shared randomness, computes a $(\log^{1+\epsilon}n,\log n)$-short path decomposition of the multicast trees with high probability in time
	\begin{align*}
	O\left((C+D)\cdot \left(1+\frac{\log \min \{C,D\}}{\log \log n}\right)\right).
	\end{align*}
\end{lem}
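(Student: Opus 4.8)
The plan is to carry out the reduction underlying \Cref{thm:UBCentralized} in the \CONGEST model, the only missing ingredient being a distributed computation of a suitable short path decomposition. Crucially, we will not compute a refinement of a \emph{heavy} path decomposition but of the \emph{rank}-based one: by \Cref{rank-size} every rank lies in $\{0,1,\dots,\lfloor\log_2 n\rfloor\}$ and hence fits into $O(\log\log n)$ bits, whereas the subtree counts needed to build a heavy path decomposition require $\Theta(\log n)$ bits and cannot be packed. Two combinatorial facts come first. (i) Along any root-to-leaf path, traversing a non-preferred edge strictly decreases the rank of the endpoint farther from the root: if the maximum-rank child of a node is unique then the preferred edge preserves the rank while every other edge drops it, and otherwise the node's rank exceeds every child's rank by one; hence a root-to-leaf path meets at most $\log_2 n$ maximal preferred paths. (ii) If we cut each maximal preferred path into consecutive pieces of length $\log^{1+\epsilon}n$, a root-to-leaf path of length at most $D$ meets at most $D/\log^{1+\epsilon}n+\log_2 n+1$ pieces, so the result is a $(\log^{1+\epsilon}n,\log n)$-short decomposition whose paths additionally have length $O(\log^{1+\epsilon}n)$.

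The crux is computing the ranks, and with them the preferred edge at each node of each tree. Each node $v$ of $T_i$ can determine $\mathrm{rank}_{T_i}(v)$ as soon as it has learned the two largest ranks among its children in $T_i$, so this is a \emph{convergecast}: reversing the direction of time, it is exactly a simultaneous multicast on the same family of trees---congestion $C$, dilation $D$---except that every message is merely an $O(\log\log n)$-bit rank. The plan is to exploit this exponentially small payload by \emph{bit-packing}: up to $\Theta(\log n/\log\log n)$ rank messages that must cross the same edge can be bundled into a single \CONGEST round, which effectively divides the congestion of the reversed-multicast instance by $\Theta(\log n/\log\log n)$; running a random-delay (LMR/ORS-style) distributed routing schedule on the packed instance---subject to the caveat that a bundle may only be sent once every rank it contains has been computed, so the schedule must respect the level structure of the trees and be pipelined accordingly---is where the bound $O\big((C+D)\cdot(1+\log\min\{C,D\}/\log\log n)\big)$ comes from. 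I expect this to be the main obstacle: one must check simultaneously that bit-packing is consistent with the dependency (pipelining) constraints of a convergecast and that the packed instance still admits a distributed routing schedule of the claimed length, and the $\log\min\{C,D\}$ factor has to be squeezed out of the interplay between the packing factor, the $C$-fold congestion, and the $D$-fold dilation.

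It remains to perform the subdivision. If $D\le\log^{1+\epsilon}n$ nothing more is needed: the preferred paths already have length $O(\log^{1+\epsilon}n)$ and a root-to-leaf path meets at most $\log_2 n+1$ of them, so the rank decomposition is already $(\log^{1+\epsilon}n,\log n)$-short. Otherwise each node must learn its distance to the top of its preferred path modulo $\log^{1+\epsilon}n$; since within each tree the preferred paths are edge-disjoint genuine paths and each network edge lies in at most $C$ of them, computing these counters---equivalently, propagating an $O(\log\log n)$-bit counter down each preferred path---is a simultaneous-unicast instance of congestion at most $C$ and dilation at most $D$, solvable in $O(C+D+\log^{1+\epsilon}n)=O(C+D)$ rounds by the distributed unicast schedules of \citet{ostrovsky1997universal}, which is within budget precisely because $D>\log^{1+\epsilon}n$ here. (Alternatively, marking each node as a cut point independently with probability $1/\log^{1+\epsilon}n$ needs no communication and, by a Chernoff bound, still yields a $(\log^{1+\epsilon}n,O(\log n))$-short decomposition.) Assembling the rank computation and the subdivision and taking a union bound over the failure events of the randomized routing schedules gives the lemma with high probability.
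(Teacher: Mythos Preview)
Your high-level plan matches the paper's: use the rank-based decomposition so that the payload per message is $O(\log\log n)$ bits, run a randomly-delayed bottom-up convergecast with bit-packing, then refine top-down into pieces of length $\log^{1+\epsilon}n$. The combinatorial facts (i) and (ii) you state are exactly what is needed.

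The genuine gap is the step you yourself flag as ``the main obstacle'': you never explain where the $\log\min\{C,D\}$ factor comes from, and the vague phrase ``interplay between the packing factor, the $C$-fold congestion, and the $D$-fold dilation'' does not point at the actual issue. The issue is \emph{tree identification}. When several ranks destined for the same parent along the same edge are packed into one $O(\log n)$-bit word, the parent must be able to tell which rank belongs to which multicast tree; a rank alone is useless without this. A naive tree ID costs $\Theta(\log n)$ bits and kills the packing entirely. The paper's fix is to attach, to each rank, the cheaper of two identifiers: either the index of the tree among the at most $C$ trees that use this particular edge ($O(\log C)$ bits), or the node's height in that tree ($O(\log D)$ bits). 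In the second case, shared randomness lets the parent recover the tree because it knows every $X_T$ and the transmission frame equals $X_T$ plus the height; sorting the packed entries by tree ID resolves any remaining ties. Thus each packed entry has $O(\log\log n+\log\min\{C,D\})$ bits. With a uniform delay $X_T\in[C]$ one gets, by a Chernoff bound, at most $O(\log n/\log\log n)$ entries per edge per frame w.h.p., hence each of the $O(C+D)$ frames takes $O\big(1+\tfrac{\log\min\{C,D\}}{\log\log n}\big)$ rounds, which is exactly the claimed bound. Your proposal to run an ``LMR/ORS-style'' schedule on a ``packed instance'' does not supply this identification mechanism, and without it the packing cannot be decoded.

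For the subdivision step your two alternatives are reasonable but not what the paper does; the paper simply reruns the same frame-based scheme top-down, now sending the counter $\ell\bmod\log^{1+\epsilon}n$ (again $O(\log\log n)$ bits) together with the same $\log\min\{C,D\}$-bit identifier, so the analysis and running time are identical to the upward pass.
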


\begin{proof}
	We will first aggregate information needed to compute ranks, from which we compute a rank-based path decomposition. We then refine this decomposition to obtain the required $(\log^{1+\epsilon}n,\log n)$-short path decomposition.
	
	Our algorithm proceeds in time frames of $O\big(1+\frac{\log \min\{C,D\}}{\log \log n}\big)$ rounds each (to be specified below). During each time frame some nodes send messages in some of their trees to their parents in those tree, as follows. First, each tree $T$ has its leaves begin transmitting at some time frame $X_T$, where $X_T$ is a random integer in the range $[C]$. 
	Internal nodes transmit once they have received messages from all of their children. Whenever node $u$ transmits to its parent $v$ in a tree $T$ during some time frame, $u$ sends its rank in this tree.
	In addition, $u$ also it transmits some additional information which allows $v$ to map this rank information to the appropriate tree, as follows.
	If $C\leq D$, then $u$ transmits the  of tree $T$ among the (at most $C$) trees that contain edge $(u,v)$, using only $O(\log C)$ bits. Otherwise, $u$ also transmits the height of $u$ in $T$, denoted by $h_T(u)$. To see how the latter information allows $v$ to determine the ID of $T$, we note that a node $v$ receives a message from its child $u$ in tree $T$ in time frame $X_T+h_T(u)$, where $h_T(u)$ is the height of $u$ in $T$. 
	Therefore, as $v$ knows $X_T$ and receives $h_T(u)$, then if a single tree transmits along $(u,v)$ during that time frame, $v$ knows precisely which tree this is.
	To avoid ambiguity due to several trees $T_1,T_2,\dots$ having the same value of $X_T+h_T(u)$, the node $u$ sends its messages of $(rank_T(u), h_T(u))$ sorted by the IDs of $T$.
	Recalling that $u$ transmits to its parent $v$ in tree $T$ during time frame $X_T+h_T(u)$, and that $X_T$ is chosen uniformly in $[C]$, we find that each edge has messages sent up it by at most one tree in expectation at any given time frame. 
	Moreover, by standard concentration inequalities and union bound, there are at most $O\big(\frac{\log n}{\log \log n}\big)$ trees that use any given edge during during any time frame, w.h.p. 
	Therefore, as we can send $O(\log n)$ bits along any edge in one round, the $(\log \log n + \log \min\{C,D\})$-sized messages of all trees using this edge during that time frame can be sent (w.h.p.) in time frames of length $O\big(1+\frac{\log D}{\log \log n}\big)$ rounds. As we use $O\big(\frac{C}{\log^{1+\epsilon} n}+\frac{D}{\log^{1+\epsilon}n} + \log n\big)$ many time frames, this rank aggregation step takes 	$O\left((C+D)\cdot \left(1+\frac{\log \min \{C,D\}}{\log \log n}\right)\right)$ rounds.
	
	To obtain a rank-based path decomposition from this rank information, we spend a further $O(C)$ rounds
	after the ranks are computed, as follows. For each tree $T$, each node $v$ in $T$ waits $X_T$ rounds, after which it notifies each of its children $u$ in $T$ whether the edges $(u,v)$ is $v$'s preferred edge in $T$. As before, w.h.p., each edge $(u,v)$ has only $O\left(\frac{\log n}{\log \log n}\right)$ trees $T$ for which this (single-bit of) information needs to be sent, which can be performed in a single round (with the bits for each tree sorted by the ID of the tree).
	Therefore, these $C$ rounds suffice to compute a rank-based decomposition. The overall claimed running time follows.
	
	Finally, we compute the desired $(\log^{1+\epsilon}n,\log n)$-short decomposition by refining the rank-based decomposition in a top-down manner, as follows. 
	As before, we take time frames of length $O\left(1+\frac{\log \min\{C,D\}}{\log\log n}\right)$. This time, we have nodes transmit information downwards to their children, in reverse order relatively to when they received information from their children (this allows the children to determine to what tree each message corresponds).
	Specifically, nodes send information corresponding to the length of a short path (i.e., of length at most $\log^{1+\epsilon}n$), which will form part of the $(\log^{1+\epsilon}n,\log n)$-short decomposition. In addition, nodes send the relevant tree $T$'s ID, or their depth in $T$. As with our bottom-up subroutine, either message allows to determine $T$ (if we also sort the messages by the tree's IDs).
	The information corresponding to the length of a short path is zero if this edge is not preferred, or more generally if it is the first edge of a path of length $\log^{1+\epsilon}n$ in our more $(\log^{1+\epsilon}n,\log n)$-short path decomposition.  When a node $v$ in tree $T$ receives such length information $\ell$, it sends $\ell+1 \mod \log^{1+\epsilon} n$ to its child in $T$. 
	This information can be encoded using $O(\log \log n)$ bits. Consequently, as for the computation of ranks, the relevant $O\big(\frac{\log n}{\log\log n}\big)$ messages sent along any edge at any time frame (w.h.p.) can be sent during a time frame of $O\big(\frac{\log n}{\log\log n}\big)\cdot (\log \log n + \log\min\{C,D\}) = O\big(1+\frac{\log \min\{C,D\}}{\log\log n}\big)$ rounds.
	Therefore, as here too we use $O\big(\frac{C}{\log^{1+\epsilon} n}+\frac{D}{\log^{1+\epsilon}n} + \log n\big)$ many time frames, the desired $(\log^{1+\epsilon}n,\log n)$-short path decomposition is computed in the claimed $O\left((C+D)\cdot \left(1+\frac{\log \min \{C,D\}}{\log \log n}\right)\right)$ rounds.
\end{proof}

%

\subsubsection{Using Our $(\log ^{1 + \epsilon} n, \log n)$-Short Decompositions to Solve \SM}

Leveraging the distributed algorithm of \citet{ostrovsky1997universal} together with our $(\log^{1+\epsilon},\log n)$-short path decomposition of all trees, we immediately obtain a distributed schedule with similar running time to that of the algorithm in \Cref{distributed-path-decomposition}. 
Partitioning the trees further into subtrees of polylogarithmic depth, we even obtain a near $O(C+D)$ bound, provided nodes know their height in each tree. Concluding, we have the following theorem which gives the properties of our distributed algorithm.

\upperBoundDistributed*

\begin{proof}
	The first bound follows rather directly from \Cref{distributed-path-decomposition} and our remark following \Cref{sec:upper-bound}, whereby any $(\log^{1+\epsilon}n,\log n)$-short decomposition of the trees of a simultaneous multicast instance allows us to compute by a local-control algorithm (specifically, the algorithm of \citet{ostrovsky1997universal}), a schedule of length $C+D+\log^{2+\epsilon} n$. Since the time to compute the short path decomposition takes $O\left((C+D)\cdot \left(1+\frac{\log \min \{C,D\}}{\log \log n}\right)\right)$, the bound follows.
	
	Now, suppose all nodes know their depth in each multicast tree they belong to. 
	We perform the multicasts of time frames of length $O(\log^{2+\epsilon}n)$, as follows.
	For each tree, we divide the tree into subtrees, and in particular consider a partition of each tree into ranges of $L\triangleq \log^{2+\epsilon}n$ consecutive levels. Note that there are at most $D/L+1$ such levels per tree.
	Each tree $T$ will choose a random delay of $X_T$ time frames, chosen uniformly in $[\lceil C/L\rceil]$. Once 
	the time frame $X_T$ arrives, we will transmit down the first $L$ levels of the tree. In the next time frame we transmit this information down the following $L$ levels, and so on and so forth. All such transmissions during a time frame are an instance of \sm, but what are its congestion and dilation? The dilation here is trivially $D'=L$, by choice of levels. As for the congestion, since each edge belongs to $C$ trees and each tree delays the its transmissions by $X_T\sim_R [\lceil C/L\rceil]$ time frames, the congestion for each edge during any time frame is $O(L)$ in expectation and w.h.p.~(since $L=\log^{2+\epsilon}n)=\Omega(\log n)$). Therefore, by the first bound of this theorem, and since $\log D'=O(\log\log n)$, each such time frame's \sm instance can be scheduled in time 
	\begin{align*}
	O\left((C'+D')\cdot \left(1+\frac{\log \min \{C',D'\}}{\log \log n}\right) +\log^{2+\epsilon}n\right) = O(\log^{2+\epsilon}n) = O(L).
	\end{align*}
	Our algorithm runs for $\lceil C/L\rceil + D/L+1$ time frames (also accounting for the random delays). This \sm algorithm therefore takes time at most
	\begin{align*}
	\left(\frac{C}{L} + \frac{D}{L}+2\right)\cdot O(L) & = O(C+D+L) = O(C+D+\log^{2+\epsilon}n). \qedhere
	\end{align*}
\end{proof}

\bibliographystyle{plainnat}
\bibliography{abb,ultimate}
	
\end{document}